\newcommand{\PAPER}[1]{#1}
\newcommand{\SOCG}[1]{}
\algnewcommand{\Inputs}[1]{%
  \State \textbf{Inputs:}
  \Statex \hspace*{\algorithmicindent}\parbox[t]{.8\linewidth}{\raggedright #1}
}
\algnewcommand{\Initialize}[1]{%
  \State \textbf{Initialize:}
  \Statex \hspace*{\algorithmicindent}\parbox[t]{.8\linewidth}{\raggedright #1}
}
\algnewcommand{\TurnOne}[1]{%
  \State \textbf{Timestep 1:}
  \Statex \hspace*{\algorithmicindent}\parbox[t]{.8\linewidth}{\raggedright #1}
}
\newsavebox{\@brx}
\newcommand{\llangle}[1][]{\savebox{\@brx}{\(\m@th{#1\langle}\)}%
  \mathopen{\copy\@brx\mkern2mu\kern-0.9\wd\@brx\usebox{\@brx}}}
\newcommand{\rrangle}[1][]{\savebox{\@brx}{\(\m@th{#1\rangle}\)}%
  \mathclose{\copy\@brx\mkern2mu\kern-0.9\wd\@brx\usebox{\@brx}}}
\newcommand{\ignore}[1]{}
\newcommand{\OPT}{\mbox{\rm OPT}}
\newcommand{\polylog}{\mathop{\mathrm{polylog}}}
\newcommand{\eps}{\varepsilon}
\newcommand{\OO}{\tilde{O}}
\DeclareRobustCommand\onedot{\futurelet\@let@token\@onedot}
\def\@onedot{\ifx\@let@token.\else.\null\fi\xspace}
\title{More Dynamic Data Structures for Geometric Set Cover with Sublinear Update Time}
\author{Timothy M. Chan}{Department of Computer Science, University of Illinois at Urbana-Champaign, USA}{tmc@illinois.edu}{https://orcid.org/0000-0002-8093-0675}{
Supported in part by NSF Grant CCF-1814026.}
\author{Qizheng He}{Department of Computer Science, University of Illinois at Urbana-Champaign, USA}{qizheng6@illinois.edu}{}{}
\titlerunning{More Dynamic Data Structures for Geometric Set Cover}
\authorrunning{T.\,M. Chan and Q. He}
\keywords{Geometric set cover, approximation algorithms, dynamic data structures, sublinear algorithms, random sampling}
\begin{document}
\maketitle

\begin{abstract}
We study geometric set cover problems in dynamic settings, allowing insertions and deletions of points and objects. We present the first dynamic data structure that can maintain an $O(1)$-approximation in sublinear update time for set cover for axis-aligned squares in 2D\@. More precisely, we obtain randomized update time $O(n^{2/3+\delta})$ for an arbitrarily small constant $\delta>0$. Previously, a dynamic geometric set cover data structure with sublinear update time was known only for unit squares by Agarwal, Chang, Suri, Xiao, and Xue [SoCG 2020]. If only an approximate size of the solution is needed, then
we can also obtain sublinear amortized update time for disks in 2D and halfspaces in 3D\@. As a byproduct, our techniques for dynamic set cover also yield an optimal randomized $O(n\log n)$-time algorithm for static set cover for 2D disks and 3D halfspaces, improving our earlier $O(n\log n(\log\log n)^{O(1)})$ result [SoCG 2020].
\end{abstract}

\section{Introduction}

Approximation algorithms for NP-hard problems and dynamic data structures are two of the major themes studied by the algorithms community.  Recently, problems at the intersection of these two threads have gained much attention, and researchers in computational geometry have also started to systematically explore such problems.  For example, at SoCG last year, two papers appeared, one on dynamic geometric set cover by Agarwal et al.~\cite{agarwal2020dynamic}, and another on dynamic geometric independent set by
Henzinger, Neumann, and Wiese~\cite{DBLP:conf/compgeom/Henzinger0W20}.
In this paper, we continue the study by Agarwal et al.~\cite{agarwal2020dynamic} and
investigate dynamic data structures for approximating the minimum set cover in natural geometric instances.

\subparagraph*{Static geometric set cover.} 
In the static (unweighted) \emph{geometric set cover} problem,
we are given a set $X$ of $O(n)$ points and a set $S$ of $O(n)$ geometric objects,  and we want to find the smallest subset of objects in $S$ that covers all points of $X$.  The problem is fundamental and has many applications.  Let $\OPT$ denote the value (i.e., cardinality) of the optimal solution.
As the problem is NP-hard for many classes of geometric objects, we are interested in efficient approximation algorithms.

Many classes of objects, such as squares and disks in 2D, objects with linear union complexity, and halfspaces in 3D, admit polynomial-time $O(1)$-approximation algorithms (i.e., computing a solution of size $O(1)\cdot \OPT$), by using $\eps$-nets and LP rounding or the multiplicative weight update (MWU) method~\cite{bronnimann1995almost,clarkson1993algorithms,clarkson2007improved}.  Some classes of objects, such as disks in 2D and halfspaces in 3D, even have PTASs~\cite{mustafa2009ptas} or quasi-PTASs~\cite{MustafaRR15}, though with large polynomial running time.

Agarwal and Pan~\cite{agarwal2014near} worked towards finding \emph{faster} approximation algorithms that run in near linear time.  They gave randomized $O(n\log^4n)$-time algorithms (based on MWU) for computing $O(1)$-approximations, for example, for disks in 2D and halfspaces in 3D\@.  At last year's SoCG~\cite{ChanH20}, we described further improvements to the running time for 2D disks and
3D halfspaces, including a deterministic $O(n\log^3 n\log\log n)$-time algorithm and a randomized $O(n\log n (\log\log n)^{O(1)})$-time algorithm.

\subparagraph*{Dynamic geometric set cover.}
It is natural to consider the \emph{dynamic} setting of the geometric set cover problem.  Here, we want to support insertions and deletions of points in $X$ \emph{as well as} insertions and deletions of objects in $S$, while maintaining an approximate solution.  
Note that the solution may have linear size in the worst case. In the simplest version of the problem, we may just want to output the value of the solution.  More strongly, we may want some representation of the solution itself, so that afterwards, the objects in the solution can be reported in constant time per element when needed.

Agarwal et al.~\cite{agarwal2020dynamic} gave a number of results on dynamic geometric set cover.  They showed that for intervals in 1D, a $(1+\eps)$-approximation can be  maintained in $O((1/\eps)n^\delta)$ time per insertions and deletions of points and intervals.  (Throughout the paper, $\delta>0$ denotes an arbitrarily small constant.)
In 2D, they had only one main result: a fully dynamic  $O(1)$-approximation algorithm for unit axis-aligned squares, with
$O(n^{1/2+\delta})$ update time.

\subparagraph*{New results.}
We present several new results on dynamic geometric set cover.
The first is a fully dynamic, randomized, $O(1)$-approximation algorithm for the more general case of arbitrary axis-aligned squares in 2D, with $O(n^{2/3+\delta})$ amortized update time.
Though our time bound is a little worse than Agarwal et al.'s for the unit square case, the arbitrary square case is more challenging.  (The unit square case reduces to case of dominance ranges, i.e., quadrants, via a standard grid approach; since the union of such ranges forms a ``staircase'' sequence of vertices, the problem is in some sense ``$1.5$-dimensional''.  In contrast, the arbitrary square problem requires truly ``2-dimensional'' ideas.)

We then consider the case of halfspaces in 3D\@.  This case is fundamental, as set cover for 2D disks reduces to set cover for 3D halfspaces by the standard lifting transformation~\cite{BerBOOK}.  Also, by duality, hitting set for 3D halfspaces is equivalent to set cover for 3D halfspaces, and
hitting set for 2D disks reduces to set cover for 3D  halfspaces as well.

For 3D halfspaces, we obtain a fully dynamic, randomized algorithm with $O(n^{12/13+\delta})\le O(n^{0.924})$ amortized update time.  
Our result here is slightly weaker: it only finds the value of an $O(1)$-approximate solution (which could be good enough in some applications).  If a solution itself is required, we can still get sublinear update time as long as $\OPT$ is sublinear (below $n^{1-\delta}$).  This assumption seems reasonable, since sublinear reporting time is not possible otherwise.  (However, we currently do not know how to obtain a stronger time bound of the form $O(n^{\alpha}+\OPT)$  with $\alpha<1$ to report a solution for 3D halfspaces.)

\subparagraph*{Remarks.}
Our results are randomized in the Monte Carlo sense: the computed solution may not always be correct, but it is correct with high probability (w.h.p.), i.e., probability at least $1-1/n^c$ for an arbitrarily large constant~$c$.  The error probability bounds hold even when the user has knowledge of the random choices made by the algorithms.  (We do not assume an ``oblivious adversary''; in fact, the algorithms make a new set of random choices each time it computes a solution, and the data structures themselves are not randomized.)

The update times are better than stated in many cases, notably, when $\OPT$ is small or when $\OPT$ is large.  More precise $\OPT$-sensitive bounds are given in lemmas and theorems throughout the paper.  (The $O(n^{2/3+\delta})$ and $O(n^{12/13+\delta})$ bounds are obtained by ``balancing''.)

We have assumed that we are required to compute a solution after every update.
In some (but not all) cases, the update cost is smaller than the cost of computing a solution (this may be useful if we are executing a batch of updates).

\subparagraph*{Techniques.}
Our algorithms are obtained by handling two cases differently: when $\OPT$ is small and when $\OPT$ is large.
Intuitively, the small $\OPT$ case is easier since we are generating fewer objects, but the large $\OPT$ case also seems potentially easier since we can tolerate a larger additive error when targeting an $O(1)$-factor approximation---so, we are in a ``win-win'' situation.  For our algorithms for  3D halfspaces, we even find it necessary to
handle an intermediate case when $\OPT$ is medium (aiming for sublinear time for sublinear $\OPT$).

Our algorithms for the small $\OPT$ case are based on the previous static MWU algorithms~\cite{bronnimann1995almost,agarwal2014near,ChanH20}.  
The adaptation of these static algorithms is not straightforward, and requires using various known techniques in new ways ($(\le k)$-levels in arrangements, for our 2D square algorithm in Section~\ref{square_algo:small}, and ``augmented'' partition trees, for our 3D halfspace algorithm in Section~\ref{algo:small}).  The medium case for 3D halfspaces (in Section~\ref{app:medium}) is technically even more challenging (where we use ``shallow'' partition trees and other ideas).

Our algorithm for squares in the large $\OPT$ case (in Section~\ref{square_algo:large}) is different (not based on MWU), and interestingly uses quadtrees in a non-obvious way.  For 3D halfspaces, our algorithm in the large $\OPT$ case (in Section~\ref{algo:large}) can compute only the value of an approximate solution, and is based on random sampling.  However, the obvious way to use a random sample (just solving the problem on a random subset of points and objects) does not work.  We use sampling in a nontrivial way, combining geometric cuttings with planar graph separators.

In some of our algorithms, notably the small $\OPT$ algorithms for squares and 3D halfspaces (in Sections \ref{square_algo:small} and \ref{algo:small}) and the large $\OPT$ algorithm for 3D halfspaces (in Section~\ref{algo:large}), the data structure part is ``minimal'': we just assume that points and objects are stored separately in standard range searching data structures.  We describe sublinear-time algorithms to compute a solution from scratch, using range searching as oracles.  Dynamization becomes trivial, since range searching data structures typically are already known to support insertions and deletions.  (It also potentially enables other operations like merging sets, or solving the set cover problem for range-restricted subsets of points and subsets of objects.)

The topic of \emph{sublinear-time algorithms} has received considerable attention in the algorithms community, due to applications to big data (where we want to solve problems without examining the entire input).  A similar model of sublinear-time algorithms where the input is augmented with range searching data structures was proposed by Czumaj et al.~\cite{CzumajEFMNRS05}, who presented results on approximating the weight of the Euclidean minimum spanning tree in any constant dimension under this model.

\subparagraph*{Application to static geometric set cover.}
Although we did not intend to revisit the static problem,  our techniques can lead a randomized $O(1)$-approximation algorithm for set cover for 3D halfspaces running in $O(n\log n)$ time, which completely eliminates the extra $\log\log n$ factors in our previous result from SoCG'20~\cite{ChanH20} and is optimal (in comparison-based models)!  This bonus result is interesting in its own right, and is described in Section~\ref{sec:static}.



\PAPER{\section{Review of an MWU Algorithm}\label{sec:mwu}}
\SOCG{\section{Review of an MWU algorithm}\label{sec:mwu}}

We begin by briefly reviewing a known static approximation algorithm
for geometric set cover, based on the \emph{multiplicative weight updates (MWU)} method.
Some of our dynamic algorithms will be built upon this algorithm.   

Specifically, 
we consider the following randomized algorithm from our previous SoCG'20 paper~\cite{ChanH20}, which is a variant of a standard algorithm by Br\"onnimann and Goodrich~\cite{bronnimann1995almost} or Clarkson~\cite{clarkson1993algorithms}
(see also Agarwal and Pan~\cite{agarwal2014near}).
Below, $c_0$ is a sufficiently large constant.
The \emph{depth} of a point $p$ in a set $S$ of objects is the
number of objects of $S$ containing $p$.
A subset of objects $T\subseteq S$ is called an \emph{$\eps$-net} of $S$ if all points with depth $\geq \eps |S|$ in $S$ are covered by $T$.
The size $|\hat{S}|$ of a multiset $\hat{S}$ refers to the sum of the multiplicities of its elements $\sum_{i\in S} m_i$.

\begin{algorithm}[H]
\caption{MWU for set cover}\label{algo:2}
\begin{algorithmic}[1]
\State Guess a value $t\in [\OPT,2\,\OPT]$.
\State Define a multiset $\hat{S}$ where each object $i$ in $S$ initially has multiplicity $m_i=1$.
\Loop \Comment{call this the start of a new \emph{round}}
    \State Fix $\rho:=\frac{c_0 t\log n}{|\hat{S}|}$ and take a random sample $R$ of $\hat{S}$ with sampling probability $\rho$.
        \While {there exists a point $p\in X$ with depth in $R$ at most $\frac{c_0}{2}\log n$}
                \For {each object $i$ containing $p$} \Comment{call lines 6--8 a \emph{multiplicity-doubling step}}
                    \State Double its multiplicity $m_i$, i.e., insert $m_i$ new copies of object $i$ into $\hat{S}$. 
                    \State For each copy, independently decide to insert it into $R$ with probability $\rho$.
                \EndFor
                \If {the number of multiplicity-doubling steps in this round exceeds $t$}
                    \State Go to line~3 and start a new round.
                \EndIf
        \EndWhile
    \State Terminate and return a $\frac1{8t}$-net of $R$. 
\EndLoop
\end{algorithmic}
\end{algorithm}


In the standard version of MWU, the ``lightness'' condition in line~5 was whether the depth of $p$ in $\hat{S}$ is at most $\frac{|\hat{S}|}{2t}$.
The main difference in the above randomized version is that lightness is tested with respect to the sample $R$, which is computationally easier to work with---$R$ has size $O(t\log n)$ with high probability (w.h.p.).  Justification of this randomized variant follows from a Chernoff bound, and was shown in~\cite[Section~4.2]{ChanH20}.

It is known that the algorithm terminates in $O(\log\frac{n}{t})$ rounds and
$O(t\log \frac{n}{t})$ multiplicity-doubling steps~\cite{bronnimann1995almost,agarwal2014near,ChanH20}.
Furthermore, $|\hat{S}|$ increases by at most a factor of 2 in each round; in particular, $|\hat{S}|$ is bounded by $n^{O(1)}$ at the end of $O(\log\frac{n}{t})$ rounds.

In the standard version of MWU, line~11 returns a $\Theta(\frac 1t)$-net of the multiset $\hat{S}$, but a
$\Theta(\frac{1}{t})$-net of $R$ works just as well:
in the end, the depth in $R$ of all points of $X$ is at least $\frac{c_0}2\log n > \frac {|R|}{8t}$ w.h.p., and so $X$ will be covered by the net.
For objects that are axis-aligned squares in 2D, disks in 2D, or halfspaces in 3D,
$\eps$-nets of size $O(\frac 1\eps)$ exist, and thus the above algorithm yields a set cover of size $O(t)$, i.e., a constant-factor approximation.  By known algorithms (e.g.,~\cite{chan2016optimal}),
the net for $R$ in line~11 can be constructed in $\OO(|R|)=\OO(t)$ time.

Several modifications have been explored in previous work.
For example, in the first algorithm of Agarwal and Pan~\cite{agarwal2014near}, each round examines all points of $X$ in a fixed order and test for lightness of the points one by one (based on the observation that a point found to have large depth will still have large depth by the end of the round).
In our previous paper~\cite{ChanH20}, we have also added a step at the beginning of each round, where the multiplicities are rescaled and rounded, so as to keep
$|\hat{S}|$ bounded by $O(n)$.
These modifications led to a number of different static implementations running in $\OO(n)$ time.


\PAPER{\section{Axis-Aligned Squares}}
\SOCG{\section{Axis-aligned squares}}
Our first result for dynamic set cover is for axis-aligned squares. Previously, a sublinear time algorithm for dynamic set cover was known only for unit squares.
Our method will be divided into two cases: when OPT is small and when OPT is large. 
%
Let $t$ be a guess on OPT\@. We will 
run our algorithm for each possible $t=2^i$
in parallel. 
(When the guess is wrong, our algorithm will be able to tell whether $t$ is approximately smaller or larger than $\OPT$ w.h.p.)
The update time will  increase only by a factor of $O(\log n)$.


\PAPER{\subsection{Algorithm for Small OPT}\label{square_algo:small}}
\SOCG{\subsection{Algorithm for small OPT}\label{square_algo:small}}
Our algorithm for the small OPT case will be based on the randomized MWU algorithm described in Section~\ref{sec:mwu}.
The key is to realize that this algorithm can actually be implemented to run in \emph{sublinear} time, assuming that the points and the objects have been preprocessed in standard range searching data structures.  Since these structures are dynamizable, we can just re-run the MWU algorithm from scratch after every update.

\subsubsection{Data structures}
Our data structures are simple.
We store the point set $X$ in the standard 2D \emph{range tree}~\cite{BerBOOK}.  
For each square $s$ with center $(x,y)$ and side length $2z$, map $s$ to a point
$s^\uparrow=(x-z,x+z,y-z,y+z)$ in 4D\@.  We also store the lifted point set $S^\uparrow=\{s^\uparrow: s\in S\}$ in a 4D range tree.
Range trees support insertions and deletions in $X$ and $S$ in polylogarithmic time.

\subsubsection{Computing a solution}
We now show how to compute an $O(1)$-approximate solution in sublinear time when OPT is small
by running Algorithm~\ref{algo:2} using the above data structures.
At first glance, linear time seems unavoidable: (i)~the obvious way to find low-depth points in line~5 is to scan through all points of $X$ in each round (as was done in previous algorithms~\cite{agarwal2014near,ChanH20}), and (ii)~explicitly maintaining the multiplicities of all points would also require linear time.

To overcome these obstacles, we observe that (i)~we can use data structures to 
find the next low-depth point $p$ without testing each point one by one (recall that there are only $\OO(t)$ multiplicity-doubling steps), and (ii)~multiplicities do not need to be maintained explicitly, so long as in line~8 we can generate a multiplicity-weighted random sample among the objects containing a given point $p$ efficiently 
(recall that the sample $R$ has only $\OO(t)$ size).
The subproblem in (ii) is a \emph{weighted range sampling} problem.




\subparagraph*{Finding a low-depth point.} 
Let $b:=\frac{c_0}{2}\log n$.
Each time we want to find a low-depth point in line~5,
we compute (from scratch) ${\cal L}_{\le b}(R)$, the $(\le b)$-level of $R$, i.e., the collection of all cells in the arrangement of the squares of depth at most $b$.  It is known~\cite{Sharir91} that
${\cal L}_{\le b}(R)$ has $O(|R|b)$ cells and can be constructed in $\OO(|R|b)$ time, which is $\OO(t)$ (since $|R|=\OO(t)$
and $b=\OO(1)$).
To find a point $p$ of $X$ that has depth in $R$
at most $b$, we simply examine each cell of ${\cal L}_{\le b}(R)$, and perform an orthogonal range query to test if the cell contains a point of $X$.  All this takes $\OO(t)$ time.  

As there
are $\OO(t)$ multiplicity-doubling steps, the total cost is $\OO(t^2)$.


\subparagraph*{Weighted range sampling.}
For each square $s$ with center $(x,y)$ and side length $2z$, define its \emph{dual} point $s^*$ to be $(x,y,z)$ in 3D\@.
For each point $p=(p_x,p_y)$, define its \emph{dual} region $p^*$ to be $\{(x,y,z): z \ge \max\{|x-p_x|,|y-p_y|\}\}$ in 3D\@.  Then a point $p$ is in the square $s$ iff the point $s^*$ is in the region $p^*$.

Let $Q$ be the set of all points $p$ for which we have performed  multiplicity-doubling steps thus far.
Note that $|Q|=\OO(t)$.
Let $b'=c_0'\log n$ for a sufficiently large constant $c_0'$.
Each time we perform a multiplicity-doubling step,
we compute (from scratch) ${\cal L}_{\le b'}(Q^*)$, the $(\le b')$-level of the dual regions $Q^*=\{p^*: p\in Q\}$.  This structure corresponds to planar order-$(\le b')$ $L_\infty$ Voronoi diagrams. 
By known results~\cite{Sharir91,Chan00SICOMP,LiuPL11}, ${\cal L}_{\le b'}(Q^*)$ has $O(|Q|(b')^2)$ cells and can be constructed
in $\OO(|Q|(b')^2)$ time, which is $\OO(t)$  (since $|Q|=\OO(t)$
and $b'=\OO(1)$).
The multiplicity of a square $s\in S$ is
equal to $2^{\mbox{\scriptsize depth of $s^*$ in $Q^*$}}$ (since each $p^*$ in $Q^*$ containing $s^*$ doubles the multiplicity of $s$).  In particular, since multiplicities are bounded by $n^{O(1)}$, the depth (i.e., level) of $s^*$ must be logarithmically bounded.  So, each $s^*$ is covered by ${\cal L}_{\le b'}(Q^*)$, and the multiplicity of $s$ is determined by which cell of ${\cal L}_{\le b'}(Q^*)$ the point $s^*$ is in.  

To generate a multiplicity-weighted sample of the squares containing $p$ for line~8, after $p$ has been inserted to $Q$, we examine all cells of ${\cal L}_{\le b'}(Q^*)$ contained in $p^*$.  For each such cell $\gamma$, we identify the squares $s\in S$ for which $s^*\in\gamma$; this reduces to an orthogonal range query in $S^\uparrow$, and the answer can be expressed as a disjoint union of $\OO(1)$ canonical subsets.  Knowing the sizes and multiplicities of these canonical subsets for all such $\OO(t)$ cells, we can then generate the weighted sample in time $\OO(t)$ plus the size of the sample.

Hence, the total cost of all $\OO(t)$ multiplicity-doubling steps is $\OO(t^2)$.

In addition, we need to generate a new weighted sample $R$ (with a new sampling probability~$\rho$) in line~4 at the beginning of each round; this can be done similar to above, in $\OO(t)$ time plus the size of the sample $(\OO(t))$, for each of the $O(\log n)$ rounds.
As mentioned, the final net computation in line~11 takes $\OO(t)$ time.  We have thus obtained:


\begin{lemma}
There exists a data structure for the dynamic set cover problem for $O(n)$ axis-aligned squares and $O(n)$ points in 2D that supports insertions and deletions in $\OO(1)$ time and can find
an $O(1)$-approximate solution w.h.p.\ to
the set cover problem in $\tilde{O}(\OPT^2)$ time.
\end{lemma}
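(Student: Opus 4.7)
The plan is to store the data in standard range searching structures—a 2D range tree for $X$ and a 4D range tree for the lifted set $S^\uparrow = \{s^\uparrow : s \in S\}$—and to simply rerun Algorithm~\ref{algo:2} from scratch after every update. These range trees support insertions, deletions, and orthogonal range queries in polylogarithmic time, so the $\OO(1)$ update bound is immediate. The real work is to show that one full run of the MWU algorithm costs $\OO(t^2)$ time, which hinges on implementing each of its steps \emph{without} scanning all of $X$ or storing the multiplicity of every square.

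For finding a low-depth point in line~5, I would, at each invocation, build from scratch the $(\le b)$-level $\mathcal{L}_{\le b}(R)$ of the current sample $R$, where $b = (c_0/2)\log n$. Since $|R| = \OO(t)$ w.h.p.\ and $b = \OO(1)$, the known construction~\cite{Sharir91} yields $\OO(t)$ cells in $\OO(t)$ time. A point of $X$ has depth at most $b$ in $R$ iff it lies in some cell of this level, so I can test every cell by a single orthogonal range query against the range tree on $X$; producing one low-depth point (or certifying that none exists) costs $\OO(t)$ per invocation and $\OO(t^2)$ over all $\OO(t)$ multiplicity-doubling steps.

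For the weighted sampling in line~8, I would maintain only the set $Q$ of points that have triggered a multiplicity-doubling step so far, so $|Q| = \OO(t)$, and after each addition to $Q$ rebuild the $(\le b')$-level $\mathcal{L}_{\le b'}(Q^*)$ of the dual regions $Q^* = \{p^* : p \in Q\}$, with $b' = c_0'\log n$. Known bounds on planar order-$(\le b')$ $L_\infty$ Voronoi diagrams give $\OO(t)$ cells in $\OO(t)$ time. Because each multiplicity-doubling step for $p$ doubles $m_s$ exactly when $s^* \in p^*$, the multiplicity of $s$ equals $2^{\text{depth of } s^* \text{ in } Q^*}$; since multiplicities stay polynomially bounded, this depth is $O(\log n)$, so every $s^*$ lies in some cell of $\mathcal{L}_{\le b'}(Q^*)$ and the multiplicity is determined by that cell. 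I would then enumerate the $\OO(t)$ cells $\gamma \subseteq p^*$, express $\{s : s^* \in \gamma\}$ as a disjoint union of $\OO(1)$ canonical subsets of the 4D range tree on $S^\uparrow$, and draw a multiplicity-weighted random sample in $\OO(t)$ time per step, hence $\OO(t^2)$ overall. The fresh samples $R$ produced in line~4 at each of the $O(\log n)$ rounds, and the final $\frac{1}{8t}$-net in line~11, fit into the same framework at cost $\OO(t)$ each.

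Summing gives total running time $\OO(t^2)$ per guess, and running in parallel over $t = 2^i$ adds only an $O(\log n)$ factor absorbed into $\OO$. The main obstacle I anticipate is the weighted sampling subroutine, where one has to verify simultaneously that (a) the cell of $\mathcal{L}_{\le b'}(Q^*)$ containing $s^*$ is always defined, which relies on the polynomial bound on $|\hat{S}|$ from Section~\ref{sec:mwu}, (b) the canonical subsets produced by the 4D range tree give clean access to the multiset sizes needed for sampling, and (c) the resulting distribution on samples coincides with what Algorithm~\ref{algo:2} would produce if all multiplicities were stored explicitly. Once these are in place, the $O(1)$-approximation guarantee follows directly from the analysis of the MWU algorithm recalled in Section~\ref{sec:mwu}.
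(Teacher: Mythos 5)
Your proposal matches the paper's proof essentially step for step: the same range-tree data structures on $X$ and $S^\uparrow$, the same rebuild of $\mathcal{L}_{\le b}(R)$ with per-cell orthogonal range-emptiness queries for line~5, the same $3$D dualization to order-$(\le b')$ $L_\infty$ Voronoi diagrams over $Q^*$ with multiplicities read off as $2^{\text{depth}}$, and the same canonical-subset sampling and per-round/per-step $\OO(t)$ accounting leading to $\OO(t^2)$. The verification items you flag at the end (well-definedness of the cell containing $s^*$, canonical-subset bookkeeping, correctness of the induced sampling distribution) are exactly the points the paper settles via the polynomial bound on $|\hat{S}|$ and the cited $(\le k)$-level/Voronoi constructions, so there is no gap.
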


\PAPER{\subsection{Algorithm for Large OPT}\label{square_algo:large}}
\SOCG{\subsection{Algorithm for large OPT}\label{square_algo:large}}
To complement our solution for the small OPT case, we now show that the problem also gets easier when OPT is large, mainly because
we can afford a large additive error.  We describe a different, self-contained algorithm for this case (not based on modifying MWU), interestingly by using quadtrees in a novel way.

To allow for both multiplicative and additive error, we use the term
\emph{$(\alpha,\beta)$-approximation} to refer
to a solution with cost at most $\alpha\,\OPT+\beta$.

For simplicity, we assume that all coordinates are integers bounded by $U=\mathrm{poly}(n)$.  At the end, we will comment on how to remove this assumption.

In the standard \emph{quadtree}, we start with a bounding square cell and recursively divide a square cell into four square subcells. We define the size of a cell $\Gamma$ to be the number of vertices in $\Gamma$ among the squares of $S$, plus the number of points of $X$ in $\Gamma$. We stop subdividing when a leaf cell has size at most $b$, where $b$ is a parameter to be set later. This yields a subdivision into
$O(\frac{n}{b})$ cells per level, and
$O(\frac{n}{b}\log U)$ cells in total.
The quadtree decomposition can be easily made dynamic under
insertions and deletions of points and squares.


For each leaf cell $\Gamma$, a square in $S$ intersecting $\Gamma$ is called \emph{short} if at least one of its vertices is in the cell, and \emph{long} otherwise, as shown in Fig.~\ref{fig:1}. Note that a long square can have at most one side crossing the cell, because the quadtree cell $\Gamma$ is also a square. 
The union of the long squares within the cell is defined by at most 4 long squares---call these the \emph{maximal} long squares.
 (If there is a square containing $\Gamma$, we can designate one such square as the maximal long square.)
For each leaf cell $\Gamma$, it suffices to approximate the optimal set cover for the input points in $X\cap\Gamma$ using only the short squares plus the at most 4 maximal long squares in $\Gamma$. 
By charging each square in the optimal solution to the cells containing its 4 vertices, we see that the sum of the sizes of the optimal covers in the leaf cells is at most $4\,\OPT+O(\frac{n}{b}\log U)$, which is indeed an $O(1)$-approximation if
we choose $b\ge \frac{n\log n}{\rm OPT}$.


\begin{figure}[htbp]\centering
\includegraphics[scale=0.5]{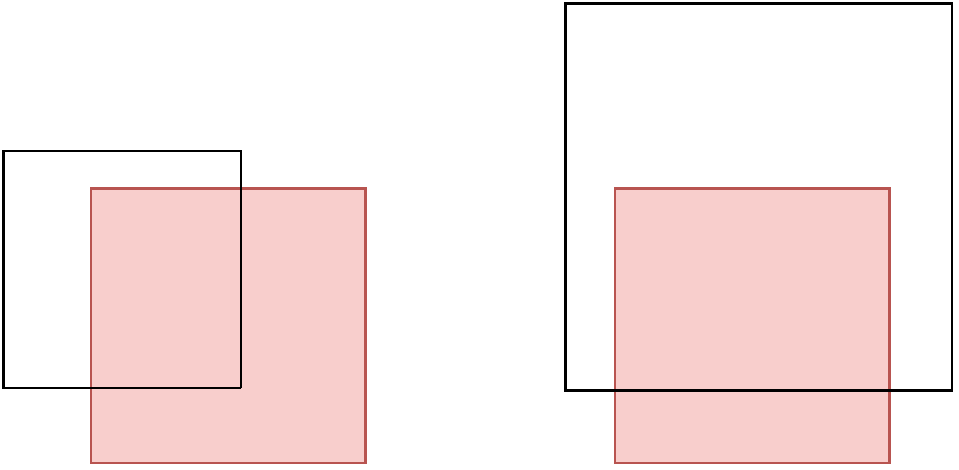}\\
\caption{Short square (left) and long square (right). The quadtree cell is shaded.}\label{fig:1}
\end{figure}

\newcommand{\DS}{{\cal S}}

Note that the complement of the union of the at most 4 maximal long squares in a cell $\Gamma$ is a rectangle $r_\Gamma$. We will store the short squares in the cell $\Gamma$ in a data structure $\DS_\Gamma$
to answer the following type of query:
\begin{quote}
    Given any query rectangle $r$, compute an $O(1)$-approximation to the optimal set cover for the points in $X\cap r$ using only the short squares.
\end{quote}
Assuming the availability of such data structures $\DS_\Gamma$, we can solve the dynamic set cover problem as follows:

\begin{itemize}
\item 
An insertion/deletion of a square $s$ in $S$ requires updating 4 of these data structures $\DS_\Gamma$ for the leaf cells $\Gamma$ containing the $4$ vertices of $s$, and also updating the maximal long squares for all leaf cells in $\tilde{O}(\frac{n}{b})$ time. 
\item
An insertion/deletion of a point $p$ in $X$ requires updating one data structure $\DS_\Gamma$ for the leaf cell $\Gamma$ containing $p$. 
\item
Whenever we want to compute a set cover solution, we examine
all $\OO(\frac{n}{b})$ cells $\Gamma$ and
query the data structure $\DS_\Gamma$ for $r_\Gamma$, and return the union of the answers.
\end{itemize}

\subparagraph*{First implementation of $\DS_\Gamma$.} A simple way to implement the data structure $\DS_\Gamma$ is as follows: in an update, we just recompute an approximate solution from scratch for every possible query rectangle in $\Gamma$. Since there are only $O(b^4)$ combinatorially different query rectangles, and static
approximate set cover on $O(b)$ squares and points takes $\OO(b)$ time~\cite{agarwal2014near,ChanH20}, the update time is $\tilde{O}(b^5)$, and the query time is trivially $O(1)$. 

As a result, the cost of  insertion/deletion in the overall method is $\tilde{O}(b^5+\frac{n}{b})$.

\subparagraph*{Improved implementation for $\DS_\Gamma$.} We further improve the update time for the data structure $\DS_\Gamma$. 
Instead of recomputing solutions for all $O(b^4)$ rectangles, the idea is to recompute solutions for a smaller number of ``canonical rectangles''.
More precisely, by using a 2D range tree~\cite{AgaEriSURV,BerBOOK} for the $O(b)$ points in $X\cap\Gamma$, with branching factor $a$ in each dimension, we can
form a set of 
canonical rectangles with total size $O(a^{O(1)}b(\log_a b)^2)$,
such that every query rectangle can be decomposed into $O((\log_a b)^2)$ canonical rectangles, ignoring portions that are empty of points.  We set $a:=b^\delta$ for an arbitrarily small constant $\delta>0$.

For each canonical rectangle $r$ with size $b_i$, there are at most $O(b_i)$ maximal long squares with respect to $r$: the union of the long squares that cut across $r$ horizontally
have at most two
edges between any two consecutive points/vertices; and a similar statement holds for the long squares that cut across $r$ vertically.
These maximal long squares can be found in $\OO(b_i)$ time by standard orthogonal range searching.
We can thus approximate the optimal set cover for the points in the canonical rectangle~$r$, using the $O(b_i)$ short squares and maximal long squares with respect to~$r$, in $\tilde{O}(b_i)$ time by known static set cover algorithms \cite{agarwal2014near,ChanH20}. 
The total time over all canonical rectangles
is  $\OO(a^{O(1)}b(\log_a b)^2)=\OO(b^{1+O(\delta)})$. 


Given a query rectangle, we can decompose it into
$O((\log_a b)^2)=O(1)$ canonical rectangles and return the union of the optimal solutions in the canonical rectangles, which is an $O(1)$-approximation (more precisely, an $O(\delta^{-2})$-approximation).

As a result, the cost of  insertion/deletion in the overall method is $\tilde{O}(b^{1+O(\delta)}+\frac{n}{b})$.  





\subparagraph{Removing the dependency on $U$.} When $U$ may be large, we can reduce the tree depth from $O(\log U)$ to $O(\log n)$ by replacing the quadtree with the \emph{BBD tree} of Arya et al.~\cite{DBLP:journals/jacm/AryaMNSW98}. Each cell in the BBD tree is the set difference of two quadtree squares, one contained in the other.  The size of each child cell is at most a fraction of the size of the parent cell.  As before, we stop subdividing when a leaf cell has size at most $b$.  Since any such leaf cell has size $\Theta(b)$ now, the number of leaf cells is $O(\frac{n}{b})$.  The BBD tree can be maintained dynamically in polylogarithmic time (for example, by periodically rebuilding when subtrees become unbalanced).
Since a leaf cell $\Gamma$ is the difference of two quadtree squares, it is not difficult to see that the number of maximal long squares in $\Gamma$ remains $O(1)$.  So, 
our previous analysis remains valid.

\begin{lemma}
Given a parameter $b$, there exists a
data structure for the dynamic set cover problem for $O(n)$ axis-aligned squares and $O(n)$ points in 2D
that maintains an $(O(1),O(\frac nb))$-approximate solution  with $\tilde{O}(b^{1+O(\delta)} + \frac{n}{b})$ insertion and deletion time.
\end{lemma}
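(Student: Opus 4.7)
The plan is to reduce the dynamic problem to independent subproblems in the cells of a hierarchical spatial decomposition whose leaves have bounded complexity, and to implement each subproblem with a range-tree-based canonical decomposition. I would first build a quadtree on the bounding square, subdividing a cell whenever its ``size'' (number of input points plus number of input square corners contained) exceeds $b$; this produces $O(n/b)$ leaves per level and $O((n/b)\log U)$ leaves in total. To remove the dependence on $U$, I would then replace the quadtree with the BBD tree of Arya et al., so each leaf is a set difference of two nested quadtree squares and the total number of leaves drops to $O(n/b)$. In each leaf $\Gamma$, I would classify the intersecting input squares as \emph{short} (a corner lies inside $\Gamma$) or \emph{long} (the square crosses $\Gamma$ without any corner inside); since $\Gamma$ is a square (or a square annulus in the BBD case), only $O(1)$ of the long squares are maximal, their union has a rectangular complement $r_\Gamma$ in $\Gamma$, and charging each optimal square to the $\le 4$ leaves containing its corners shows that the sum of local optima over all leaves is at most $4\,\OPT+O(n/b)$.

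This reduces the problem to a local query inside each leaf: given a rectangle $r_\Gamma\subseteq\Gamma$, approximate the optimal cover of $X\cap r_\Gamma$ by the short squares of $\Gamma$ together with the maximal long squares. To dynamize this, I would build in each leaf a 2D range tree $\DS_\Gamma$ with branching factor $a:=b^\delta$ over the $O(b)$ points in $X\cap\Gamma$; this produces an explicit family of canonical rectangles of combined size $\OO(b^{1+O(\delta)})$, and every axis-aligned query rectangle decomposes into $O((\log_a b)^2)=O(\delta^{-2})$ of them. For each canonical rectangle $r$ of complexity $b_i$, I would precompute once an $O(1)$-approximate cover of $X\cap r$ using the short squares and the long squares that are maximal \emph{with respect to $r$}; a query $r_\Gamma$ is then answered in $O(1)$ time by unioning the $O(\delta^{-2})$ precomputed solutions for the canonical rectangles in its decomposition, yielding an $O(\delta^{-2})=O(1)$-approximation.

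Updates are accounted as follows: inserting or deleting a point $p$ affects only the single leaf containing $p$ and costs $\OO(b^{1+O(\delta)})$; inserting or deleting a square $s$ affects the at most four leaves containing its corners and may also change the $O(1)$ maximal long squares of every other leaf $s$ crosses, for a total of $\OO(b^{1+O(\delta)}+n/b)$. The main obstacle I expect is the $O(b_i)$ bound on maximal long squares with respect to an arbitrary canonical rectangle $r$: the earlier ``at most four'' bound relied crucially on $\Gamma$ being a square, and for a general non-square $r$ one must instead observe that the long squares crossing $r$ horizontally form a monotone family admitting at most two of them between any two consecutive input points, and symmetrically for vertical crossings, so that orthogonal range searching recovers all $O(b_i)$ maximal ones in $\OO(b_i)$ time. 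Plugging the known static $\OO(n)$-time set cover algorithm into each canonical rectangle then yields both the $\OO(b^{1+O(\delta)}+n/b)$ update bound and the $(O(1),O(n/b))$ approximation guarantee claimed.
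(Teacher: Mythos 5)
Your proposal follows the paper's proof essentially step for step: the same quadtree/BBD-tree decomposition with leaves of size $O(b)$, the same short/long classification with $O(1)$ maximal long squares per leaf and the $4\,\OPT+O(n/b)$ charging argument, the same range-tree-with-branching-factor-$a=b^\delta$ canonical decomposition inside each leaf, the same $O(b_i)$-bound on maximal long squares with respect to a canonical rectangle (with the same monotonicity/consecutive-points argument), and the same update accounting of $\OO(b^{1+O(\delta)})$ for the $\le 4$ corner-containing leaves plus $\OO(n/b)$ for refreshing the $O(1)$ maximal long squares elsewhere. The one point worth emphasizing, which you flag as an obstacle, is precisely the paper's key technical observation; you resolve it the same way the paper does.
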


\subparagraph*{Combining the algorithms.} When $\OPT\leq n^{1/3}$, we use the algorithm for small OPT; the running time is $\OO(\OPT^2)\le \OO(n^{2/3})$. When $\OPT >n^{1/3}$, we use the algorithm for large OPT with $b=n^{2/3}$, so that an $(O(1),O(\frac nb))$-approximation is indeed an $O(1)$-approximation;
the running time is $\OO(b^{1+O(\delta)}+\frac n{b})=O(n^{2/3+O(\delta)})$.


\begin{theorem}
There exists a 
data structure for the dynamic  set cover problem for $O(n)$ axis-aligned squares and $O(n)$ points in 2D
that maintains an $O(1)$-approximate solution w.h.p.\ with $O(n^{2/3+\delta})$ insertion and deletion time for any constant $\delta>0$.
\end{theorem}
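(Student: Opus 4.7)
My plan is to combine the small-$\OPT$ and large-$\OPT$ data structures from the two preceding lemmas through a balancing argument at $\OPT \approx n^{1/3}$. Following the guessing scheme announced at the start of this section, I would instantiate both structures for each $t = 2^i$ with $i = 0, 1, \ldots, \lceil\log_2 n\rceil$, which inflates the update cost by only an $O(\log n)$ factor that is easily absorbed into the $n^\delta$ slack. The small-$\OPT$ branch requires only polylogarithmic insertion and deletion time, while the large-$\OPT$ branch is instantiated once with parameter $b := n^{2/3}$ and so has update cost $\tilde{O}(b^{1+O(\delta)} + n/b) = O(n^{2/3+O(\delta)})$.

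On a query, when $\OPT \le n^{1/3}$ the small-$\OPT$ branch with the correct guess $t \in [\OPT, 2\,\OPT]$ produces an $O(1)$-approximation in $\tilde{O}(\OPT^2) \le \tilde{O}(n^{2/3})$ time. When $\OPT > n^{1/3}$ the large-$\OPT$ branch returns an $(O(1), O(n/b)) = (O(1), O(n^{1/3}))$-approximation, in which the additive $O(n^{1/3})$ slack is dominated by $\OPT$ and hence absorbed into the multiplicative factor, yielding a genuine $O(1)$-approximation. Selecting the constant $\delta$ in the theorem to subsume the $O(\log n)$ factors from parallel guessing as well as the polylogarithmic overheads coming from the underlying range-searching structures yields the claimed $O(n^{2/3+\delta})$ update bound.

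The point that needs the most care, and which I expect to be the main subtlety, is dispatching between the two regimes without knowing $\OPT$. For each guess $t$ the MWU procedure of Algorithm~\ref{algo:2} either succeeds and returns a cover of size $O(t)$, or exhausts its round/multiplicity-doubling budget; by the analysis recalled in Section~\ref{sec:mwu} the latter happens with high probability only when $t < \OPT$. Thus the smallest successful $t$ lies within a constant factor of $\OPT$ w.h.p., and I can read $\OPT$ off from the parallel guesses: if this smallest $t$ is at most $n^{1/3}$, I return the corresponding small-$\OPT$ cover; otherwise $\OPT = \Omega(n^{1/3})$ and I return the large-$\OPT$ cover. A union bound over the $O(\log n)$ guesses preserves high-probability correctness, completing the argument.
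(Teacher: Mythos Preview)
Your proposal is correct and follows essentially the same balancing argument as the paper: split at $\OPT\approx n^{1/3}$, set $b=n^{2/3}$ in the large-$\OPT$ structure, and use the parallel guessing over $t=2^i$ to dispatch. One cosmetic point: the large-$\OPT$ structure does not depend on $t$ and is instantiated once (not once per guess), and for the time bound to hold you must cap the small-$\OPT$ MWU runs at $t\le O(n^{1/3})$ rather than all $t$ up to $n$---otherwise the $\tilde O(t^2)$ cost blows up when $\OPT$ is large; your dispatch rule already implicitly does this, but it is worth stating explicitly.
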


The case of fat rectangles can be reduced to squares, since such rectangles can be replaced by $O(1)$ squares (increasing the approximation factor by only $O(1)$).
Our approach can be modified to work more generally for homothets of a fixed fat convex polygon with a constant number of vertices.






\ignore{
\subparagraph*{Approximate the size of OPT.} If we only want to approximate the size of OPT, we can probably improve the running time. The optimal set cover solution in a random quadtree cell should have expected value $O(\frac{t}{n/s})$, and worst-case $O(s)$. After normalizing $X_i$ to $[0,1]$, $\mathbb{E}[X_i]=O(\frac{t}{n})$. Using Chernoff bound, to get an $O(1)$-approximation, we need to sample $O(\frac{n}{t}\log n)$ quadtree cells.
}

\section{Halfspaces in 3D}
In this section, we study dynamic geometric set cover for the more challenging case of  3D halfspaces. Using the standard lifting transformation~\cite{BerBOOK}, we can transform 2D disks to 3D upper halfspaces.
For simplicity, we assume that all halfspaces are upper halfspaces;
Section~\ref{sec:upperlower} discusses how
to modify our algorithms when there are both upper and lower halfspaces.
Our method will be divided into three cases: small, medium, and large OPT.

\ignore{
We need the following theorems:

\begin{theorem}[partition theorem \cite{Matousek91}]\label{thm:1}
We can partition a set $P$ of $n$ points in 3D into $r$ subsets $P_1,\dots,P_r$, each of size $\Theta(\frac{n}{r})$ s.t.\ any halfspace crosses $O(r^{2/3})$ cells.
\end{theorem}
by Matou\v{s}ek. Chan showed it can be constructed in $O(n\log n)$ time \cite{Chan10}.

\begin{theorem}[shallow partition theorem \cite{matousek1992reporting}]\label{thm:2}
We can partition a set $P$ of $n$ points in 3D into $r$ subsets $P_1,\dots,P_r$, each of size $\Theta(\frac{n}{r})$ s.t.\ any halfspace that contain $O(\frac{n}{r})$ points in $P$ crosses $O(\log r)$ cells.
\end{theorem}

If $r<n^{\epsilon}$, the shallow partition can be constructed in $O(n\log n)$ time \cite{matousek1992reporting}. For larger $r$, we can recursively construct the shallow partition. The total crossing number is $(\log r)^{O(\log\log n)}=2^{O((\log\log n)^2)}$. We can also construct in $O(n^{1+\delta})$ time for any $r$ \cite{}.
}

\PAPER{\subsection{Algorithm for Small OPT}\label{algo:small}}
\SOCG{\subsection{Algorithm for small OPT}\label{algo:small}}

Similar to the small OPT algorithm for axis-aligned squares in Section~\ref{square_algo:small},
we describe a small OPT algorithm for halfspaces based on the randomized MWU algorithm in Section~\ref{sec:mwu}.
Although our earlier approach using levels in arrangements could be generalized, we describe a better approach based on augmenting partition trees with counters.

\subsubsection{Data structures}

\newcommand{\D}{\Gamma}
\newcommand{\PAR}{\mathop{\rm parent}}
We store the 3D point set $X$
in Matou\v sek's \emph{partition tree}~\cite{matouvsek1992efficient}: The tree has height $O(\log n)$ and degree $r$ for a sufficiently large constant $r$.  Each node $v$
stores a simplicial cell $\D_v$ and a ``canonical subset'' $X_v\subset  \D_v$, where $X_v=X$ at the root $v$, and
$\D_v$ is contained in $\D_{\PAR(v)}$,
$X_v$ is the disjoint union of $X_{v'}$ over all children $v'$ or $v$, and $X_v$ has constant size at each leaf~$v$.
Furthermore, any halfspace crosses $O(n^{2/3+\delta})$ cells of the tree for any arbitrarily small constant $\delta>0$ (depending on $r$).
Here a halfspace $h$ \emph{crosses} a cell $\D$ iff the boundary of $h$ intersects $\D$.

For each upper halfspace $h$, let $h^*$ denote its dual point;
for each point $p$, let $p^*$ denote its dual upper halfspace.
(Duality~\cite{BerBOOK} is defined so that $p$ is in $h$ iff $h^*$ is in $p^*$.)

We also store the 3D dual point set $S^*=\{h^*:h\in S\}$ in Matou\v sek's partition tree.
Each node $v$ stores a cell $\Gamma_v$ and a canonical subset $S^*_v\subset \Gamma_v$ like above.

Matou\v sek's partition trees can be built in $\OO(n)$ time and support insertions and deletions in $X$ and $S$ in polylogarithmic time.
(In the static case, there are slightly improved partition trees reducing the $n^\delta$ factor in the crossing number bound \cite{matouvsek1992efficient,matousek93DCG,Chan12DCG}, but these will not be important to us.)

\subsubsection{Computing a solution}
We now show how to compute an $O(1)$-approximate solution in sublinear time when OPT is small
by running Algorithm~\ref{algo:2} using the above data structures.
As in Section~\ref{square_algo:small}, the main subproblems are (i) finding a low-depth point with respect to $R$, and (ii) weighted range sampling, where the weights are the multiplicities (which are not explicitly stored).

\subparagraph*{Finding a low-depth point.}
We maintain two values at each node $v$ of the partition tree of~$X$:

\begin{itemize}
    \item $c_v$ is the number of halfspaces of $R$ containing $\Gamma_v$ but not containing $\Gamma_{\PAR(v)}$.
 
    \item $d_v$ is the minimum depth among all points in $X_v$ with respect to the halfspaces of $R$ crossing $\Gamma_v$.
\end{itemize}

\noindent
The overall minimum depth with respect to $R$ is given by the value $d_v$ at the root $v$.  Whenever we insert a halfspace $h$ to $R$, for each of the $O(n^{2/3+\delta})$ cells $\D_v$ crossed by $h$, we update the counters $c_{v'}$ for the children $v'$ of the nodes $v$; we also update the value $d_v$ bottom-up according to the formula $d_v = \min_{\mbox{\scriptsize\rm  child $v'$ of $v$}}(d_{v'}+c_{v'})$.  Thus, all values can be maintained in $O(n^{2/3+\delta})$ time per insertion to $R$.

As there are $\OO(t)$ insertions to $R$, the total cost is $\OO(t n^{2/3+\delta})$.  This cost covers the resetting of counters at every round.

(We remark that the idea of augmenting nodes of partition trees with counters appeared before in at least one prior work on dynamic geometric data structures~\cite[Theorem~4.1]{Chan03sicomp}.)

\subparagraph*{Weighted range sampling.}
Let $Q$ be the set of all points $p$ for which we have performed  multiplicity-doubling steps thus far.  Note that $|Q|=\OO(t)$.
The multiplicity of a halfspace $h\in S$ is 
$2^{\mbox{\scriptsize depth of $h^*$ in $Q^*$}}$.
To implicitly represent the multiplicities and their sum, we maintain two values at each node $v$ of the partition tree for $S^*$:

\begin{itemize}
    \item $c_v$ is the number of dual halfspaces of $Q^*$ containing $\Gamma_v$ but not containing $\Gamma_{\PAR(v)}$.
 
    \item $m_v$ is the sum of $2^{\mbox{\scriptsize depth of $h^*$ among the halfspaces of $Q^*$ crossing $\Gamma_v$}}$ over all $h^*\in S^*_v$.
\end{itemize}

\noindent
Whenever we insert a point $p$ to $Q$, 
for each of the $O(n^{2/3+\delta})$ cells $\D_v$ crossed by the dual halfspace $p^*$, we update the counters $c_{v'}$ for the children $v'$ of the nodes $v$; we also update the value $m_v$ bottom-up according to the formula $m_v = \sum_{\mbox{\scriptsize\rm  child $v'$ of $v$}} 2^{c_{v'}}m_{v'}$.  Thus, all values can be maintained in $O(n^{2/3+\delta})$ time per insertion to $Q$.

To generate a weighted sample of the halfspaces of $S$ containing $p$ for line~8, we find all $O(n^{2/3+\delta})$ cells $\D_v$ crossed by the dual halfspace $p^*$, and consider the canonical subsets $S_{v'}^*$ for the children $v'$ of $v$ with $\D_{v'}$ contained in $p^*$. We can then sample from these canonical subsets, weighted by $m_{v'} 2^{\sum_u c_u}$, where the sum is over all ancestors $u$ of $v'$.
All this takes time $O(n^{2/3+\delta})$ plus the size of the sample.

Hence, the total cost of all $\OO(t)$ multiplicity-doubling steps is $\OO(tn^{2/3+\delta})$.

\begin{lemma}
There exists a data structure for the dynamic set cover problem for $O(n)$ upper halfspaces and $O(n)$ points in 3D that supports insertions and deletions in $\OO(1)$ time and can find
an $O(1)$-approximate solution w.h.p.\ in $\OO(\OPT\cdot n^{2/3+\delta})$ time.
\end{lemma}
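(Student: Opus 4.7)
The plan is to run the randomized MWU algorithm (Algorithm~\ref{algo:2}) from Section~\ref{sec:mwu} on top of the two augmented partition trees described above. From the review in Section~\ref{sec:mwu}, I know the algorithm performs $O(\log(n/t))$ rounds and $\tilde{O}(t)$ multiplicity-doubling steps with $t=\Theta(\OPT)$, and that the final net in line~11 can be constructed in $\tilde{O}(t)$ time since $3$D halfspaces admit $\eps$-nets of size $O(1/\eps)$. Thus the whole analysis will reduce to showing that the two operations executed inside each round---(i) finding a point of $X$ of depth at most $\frac{c_0}{2}\log n$ in the current sample $R$, and (ii) drawing a multiplicity-weighted sample of halfspaces containing a given point $p$---can each be performed in $O(n^{2/3+\delta})$ time per invocation.

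For (i), I would maintain at each node $v$ of the partition tree of $X$ two values: a counter $c_v$ equal to the number of halfspaces in $R$ that contain $\Gamma_v$ but not $\Gamma_{\PAR(v)}$, and a value $d_v$ equal to the minimum depth over $X_v$ counting only halfspaces of $R$ that cross $\Gamma_v$. Every insertion of a halfspace into $R$ touches only the $O(n^{2/3+\delta})$ cells that halfspace crosses, and $d_v$ will propagate upward via $d_v=\min_{v'}(d_{v'}+c_{v'})$; the global minimum depth will just be $d_{\text{root}}$, and a witness point can be recovered by a descent in the same time. For (ii), I will symmetrically augment a partition tree on the dual points $S^*$ with a counter $c_v$ and a weighted sum $m_v=\sum_{h^*\in S_v^*}2^{\text{depth of }h^*\text{ among halfspaces of }Q^*\text{ crossing }\Gamma_v}$; a multiplicity-doubling at $p$ corresponds to inserting $p^*$ into $Q$, and $m_v$ will be propagated bottom-up by $m_v=\sum_{v'}2^{c_{v'}}m_{v'}$. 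To sample, I will walk the $O(n^{2/3+\delta})$ cells crossed by $p^*$ and choose among the fully-contained children with probability proportional to $m_{v'}2^{\sum_u c_u}$ over ancestors $u$ of $v'$.

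The main obstacle will be the implicit representation of multiplicities: because a single multiplicity-doubling step can double the weights of $\Theta(n)$ objects simultaneously, storing multiplicities explicitly would already exceed the sublinear budget. The key observation that makes everything work is that the multiplicity of $h$ equals $2^{\text{depth of }h^*\text{ in }Q^*}$, so all doublings reduce to a single insertion of a dual halfspace into a structure whose ``operational'' size is $|Q|=\tilde{O}(t)$, and the exponentiated sum $m_v$ composes correctly because a fully-containing ancestor halfspace multiplies every contribution in $S_v^*$ by $2$. Summing the per-step cost $O(n^{2/3+\delta})$ over the $\tilde{O}(t)$ multiplicity-doubling steps (and over the $O(\log n)$ sample resets at the start of each round, whose cost is of the same order) will yield the claimed $\tilde{O}(\OPT\cdot n^{2/3+\delta})$ solution time, while the $\tilde{O}(1)$ update bound follows directly from the polylogarithmic insertion/deletion time of Matou\v{s}ek's partition tree on both $X$ and $S^*$.
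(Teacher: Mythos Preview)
Your proposal is correct and matches the paper's proof essentially line for line: the same two augmented partition trees (on $X$ and on $S^*$), the same counters $c_v,d_v$ with the recurrence $d_v=\min_{v'}(d_{v'}+c_{v'})$ for low-depth queries, the same weighted sums $m_v$ with the recurrence $m_v=\sum_{v'}2^{c_{v'}}m_{v'}$ for sampling, and the same key identity that the multiplicity of $h$ equals $2^{\text{depth of }h^*\text{ in }Q^*}$. The cost accounting and the conclusion are also identical.
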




\ignore{
rounding the multiplicity? We can add $1$ to each $m_i$ instead. Then each $m_i$ is a double, and we sample proportional to $m_i$. This ensures that even if we don't remove an $\epsilon$-net in the beginning, each $\epsilon$-light point is also $\epsilon$-light in the original set of halfspaces $S$ (because each $m_i\geq 1$). Also we know $|\hat{S}|$, because we can maintain this after each multiplicity doubling step, so we can normalize the $m_i$'s.

We can also maintain an $\epsilon$-net in the beginning, and delete all points covered by it. Whenever we delete a halfspace from the $\epsilon$-net, we need to rebuild the $\epsilon$-net in $\tilde{O}(n)$ time. But the $\epsilon$-net we constructed is a subset of a random sample $R$ with size $O(OPT\log n)$, so this happens with probability $\tilde{O}(\frac{t}{n})$. The expected running time is $\tilde{O}(t)$. However this can only solve the oblivious adversary case. (use our new ideas.)

combine partition with cutting? but how to maintain counters using cutting?

alternative solution? look at the currently selected points that performed a multiplicity doubling step?

}

\PAPER{\subsection{Algorithm for Medium OPT}
\label{app:medium}
}
\SOCG{\subsection{Algorithm for medium OPT}
}


The preceding algorithm works well only when $\OPT$ is smaller than about $n^{1/3}$.  We show that a more involved algorithm, also based on MWU, can achieve sublinear time even when $\OPT$ approaches $n^{1-\delta}$.  The basic approach is to use \emph{shallow} versions of the partition trees~\cite{matousek1992reporting}.


\newcommand{\APPENDIXMEDIUMOPT}{

\subsubsection{Data structures}

We begin with a lemma that was
used before in some of the previous static algorithms by Agarwal and Pan~\cite{agarwal2014near} and ours~\cite{ChanH20}.
With this lemma, we can effectively
make every input halfspace shallow, i.e., contain at most $\OO(\frac nt)$ points.
The extra condition in the second sentence of the lemma below is new, and is needed in order to make dynamization possible later (difficulty arises when halfspaces in $T_0$ get deleted).  Because of this extra condition, we describe a construction which is different from the previous algorithms~\cite{agarwal2014near,ChanH20}.

\begin{lemma}\label{lem:t0}
Given a set $X_0$ of $n$ points, a set $S_0$ of $n$ halfspaces in 3D and a parameter $t$, we can construct a subset of halfspaces $T_0\subseteq S_0$ of size $O(t)$, and a subset of points
$A_0\subseteq X_0$, such that (i)~each point in $A_0$ is covered by $T_0$, and
(ii) each halfspace of $S_0-T_0$ contains $\OO(\frac nt)$ points of $X_0-A_0$.

Furthermore,  we can decompose $A_0=\bigcup_{h\in T_0}A_h$,
where $A_h$ has size $O(\frac nt)$ for each $h\in T_0$, 
such that each point in $A_h$ is covered by $h$.  
The construction takes $\OO(n)$ time.
\end{lemma}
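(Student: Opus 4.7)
My plan is to construct $T_0$ and the decomposition $A_0 = \bigcup_{h \in T_0} A_h$ together by a simple greedy loop. Start with $T_0 = A_0 = \emptyset$; while there exists a halfspace $h \in S_0 \setminus T_0$ with $|h \cap (X_0 \setminus A_0)| > c\,n/t$ for a sufficiently large constant $c$, insert $h$ into $T_0$, designate an arbitrary subset $A_h \subseteq h \cap (X_0 \setminus A_0)$ of size exactly $\lceil c\,n/t \rceil$, and add $A_h$ to $A_0$. Condition~(i) holds because each point in $A_0$ is explicitly placed in some $A_h$ whose owner $h \in T_0$ covers it; condition~(ii) is precisely the termination condition of the loop. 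For the size bounds, each iteration moves $\lceil c\,n/t \rceil$ fresh points from $X_0 \setminus A_0$ into $A_0$, so since $|A_0| \le n$ the loop runs at most $t/c = O(t)$ times, yielding $|T_0| = O(t)$ and $|A_h| = O(n/t)$ by construction.

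The main challenge is realizing this loop in $\OO(n)$ total time. I would preprocess $X_0$ in a 3D halfspace range reporting structure based on Matou\v sek's partition tree, supporting $\OO(n^{2/3})$-time queries and polylog-time deletions. Once a currently-heavy halfspace $h$ is identified, one reports the $\lceil c\,n/t \rceil$ covered points forming $A_h$ and deletes them from the structure, all in $\OO(n/t)$ amortized time per iteration, hence $\OO(n)$ overall across the $O(t)$ iterations.

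Quickly detecting a heavy halfspace is the delicate step. I would compute a $(c\,n/t)$-shallow cutting of the dual arrangement of $X_0^*$ in $O(n \log n)$ time via the constructions of Matou\v sek--Chan--Ramos: this yields $O(t)$ cells covering the depth-$\le c\,n/t$ region, each with a conflict list of $O(n/t)$ dual halfspaces. A halfspace $h \in S_0$ is initially heavy precisely when its dual point $h^*$ lies outside every cell, so point-locating $S_0^*$ in the cutting identifies the initial candidate set $H$ in $\OO(n)$ time. I would then maintain, for each $h \in H$, a current depth counter that is decremented whenever a newly inserted point of $A_0$ lies inside $h$; the halfspaces of $H$ containing a given removed point $p$ are retrieved by a halfspace range query on $H^*$ in a dual partition tree. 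To keep $|H|$ under control, I would periodically rebuild the shallow cutting on the residual point set $X_0 \setminus A_0$ (at geometric intervals as its size halves), so that the active candidate set always has size $\OO(t)$.

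The hardest part of the proof will be closing this amortization: bounding the total depth-counter update cost by $\OO(n)$, even though a single removed point could lie in many halfspaces of $H$. The key will be to exploit the conflict-list structure of the shallow cutting (a removed point only interacts non-trivially with halfspaces whose dual images lie in nearby cells) combined with the geometric rebuild schedule, so that each of the $n$ point removals contributes only polylogarithmic amortized cost to both counter updates and cutting rebuilds, yielding the claimed $\OO(n)$ overall construction time.
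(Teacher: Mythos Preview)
Your greedy construction and its correctness analysis match the paper's exactly: iteratively peel off $\lceil cn/t\rceil$ fresh points covered by a currently heavy halfspace, so $|T_0|=O(t)$ by a disjointness count and (ii) is the loop's termination condition. The difference is entirely in the $\OO(n)$ implementation, and here your proposal takes an unnecessarily hard route with a gap you yourself flag as unresolved.

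The paper sidesteps the ``detect a currently heavy halfspace'' problem by exploiting a monotonicity you overlooked: since $X_0\setminus A_0$ only shrinks during the loop, a halfspace that is light at the moment it is examined stays light thereafter. Hence one can simply scan all of $S_0$ \emph{once} in arbitrary order; for each $h$, test whether it is heavy against the current residual set, and if so add it to $T_0$ and strip off $n/t$ of its points. No candidate set, no counters, no periodic rebuilds. For the time bound, the residual point set is maintained in Chan's dynamic 3D halfspace range reporting structure (polylogarithmic amortized update, so all $\le n$ deletions cost $\OO(n)$ total). Heaviness is tested only approximately---an $O(\log n)$-factor estimate suffices, which is why condition~(ii) carries a $\OO(\cdot)$ rather than $O(\cdot)$---and Chan's structure delivers such an estimate in $\OO(1)$ time per halfspace by summing the sizes of the $O(\log n)$ canonical lists it returns. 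The single pass over $S_0$ therefore costs $\OO(n)$.

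Your counter-decrement scheme, by contrast, has a real obstacle you did not close: when you delete a point $p$, the number of candidate halfspaces $h\in H$ containing $p$ can be $\Theta(|H|)$, and nothing in your construction bounds $|H|$ by $\OO(t)$---the number of simultaneously heavy halfspaces can be $\Theta(n)$ (e.g., when many halfspaces share a common dense cluster of points), and rebuilding the shallow cutting on a smaller residual point set does not change that. So the total decrement work can be $\Theta(n\cdot|H|)$, and the amortization remains open exactly where you say it does.
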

\begin{proof}
We use a simple ``greedy'' approach:
We examine each halfspace $h\in S$ in an arbitrary order,
and test whether $h$ contains more than $\frac nt$ points of $X_0$.
If so, we add $h$ to $T_0$, pick some $\frac nt$ (but not more) points in $X_0\cap h$ to add to $A_h$, and delete $A_h$ from $X_0$.  Clearly, the number of halfspaces added to $T_0$ is $O(t)$.

To bound the construction time, we maintain $X_0$ in Chan's  dynamic 3D halfspace range reporting structure with polylogarithmic amortized update time~\cite{Chan10JACM}.  The cost of all deletions is  $\OO(n)$.  Testing each halfspace $h$ requires a halfspace
range counting query, but for the above purposes, it suffices to use an approximate count, with $\OO(1)$ approximation factor.  Given a query halfspace $h$ containing $k$ points, in $\OO(1)$ time, one can find $O(\log n)$ lists of size $O(k)$ from Chan's data structure~\cite{Chan10JACM} (see also~\cite[Section~3]{Chan12IJCGA}), so that the $k$ points inside $h$ are contained in the union of these lists.  The sum of the sizes of these lists is $O(k\log n)$ and $\Omega(k)$, yielding an $O(\log n)$-approximation of the count.
\end{proof}

We divide the update sequence into \emph{phases} of $g$ updates each, for a parameter $g\ll t$.  Our data structure will be rebuilt periodically, after each phase.  Let $X_0$ and $S_0$ be $X$ and $S$ at the beginning of the current phase.  Let $X_I$ and $S_I$ be the current set of points and the set of halfspaces that have been inserted to $X$ and $S$ in the current phase.
Let $X_D$ and $S_D$ be the current set of points and the set of halfspaces deleted from $X$ and $S$ in the current phase.
At the start of each phase, we apply Lemma~\ref{lem:t0}.

We store the 3D point set $X_0-A_0$
in a partition tree, like before.
However, we will need a bound on the
crossing number that is sensitive to
shallowness: namely, 

\begin{itemize}
\item any halfspace containing $\OO(\frac nt)$ points of $X_0-A_0$  crosses at most $O((\frac nt)^{2/3+O(\delta)})$ cells of the tree;
\item any other halfspace crosses at most $O(t\cdot  (\frac nt)^{2/3+\delta})$ cells.
\end{itemize}

\noindent
This follows by  combining  Matou\v sek's \emph{shallow} version of the partition tree~\cite{matousek1992reporting} with the original version of the partition tree:
Using the shallow version of the partition tree, with $O(t)$ leaf cells containing $O(\frac nt)$ points each, 
any halfspace that contains $\OO(\frac nt)$ points is known to cross at most $O(n^\delta)$ leaf cells~\cite{matousek1992reporting}.  For each leaf cell with $O(\frac nt)$ points, we build the original partition tree~\cite{matouvsek1992efficient}, which has crossing number bound $O((\frac nt)^{2/3+\delta})$.

In addition, we maintain the following counter at each node $v$ of the partition tree:

\begin{itemize}
    \item $c^\#_v$ is the number of halfspaces of $T_0\cup S_I - S_D$ containing $\Gamma_v$ but not containing $\Gamma_{\PAR(v)}$.
\end{itemize}

\noindent
In the beginning of a phase, we can compute each count $c^\#_v$ by $O(1)$ 3D simplex range counting query on the dual points of $T_0^*$ (since halfspaces containing a simplex and not containing another simplex dualize to points in a polyhedral region of constant size); by known results~\cite{AgaEriSURV,matouvsek1992efficient}, $O(n)$ such queries on $O(t)$ points in 3D take $\OO(n + (nt)^{3/4})$ time.  The amortized cost is $\OO(\frac{n + (nt)^{3/4}}{g})$.

Afterwards, during each insertion/deletion of a halfspace in $S$, we increment/decrement the counters at $O(t\cdot (\frac nt)^{2/3+\delta})$ nodes of the tree (note that the halfspace may not be shallow).  Thus, each update in $S$ costs $O(t^{1/3}n^{2/3+\delta})$ time.

We also store the 3D dual point set $(S_0-T_0-S_D)^*$ in a partition tree, again with $O((\frac nt)^{2/3+O(\delta)})$ crossing number bound with respect to halfspaces containing $\OO(\frac nt)$ points.  Such a partition tree supports deletions in polylogarithmic time.


\subsubsection{Computing a solution}

\newcommand{\Rextra}{R_{\mbox{\scriptsize\rm extra}}}

We now show how to compute an $O(1)$-approximate solution in sublinear time when OPT is sublinear  using the above data structures.
We first take a new unweighted random sample $\Rextra\subset S_0-T_0-S_D$ of size $t$.
We include $\Rextra\cup T_0\cup S_I-S_D$ in the solution.
Since this set has size $O(t+g)=O(t)$, this increases the approximation factor only by $O(1)$.  
Let $E=\bigcup_{h\in T_0\cap S_D} A_h \cup X_I - X_D$, which has $O(\frac{gn}{t})$ points.
It remains to cover the points in $(X_0-A_0-X_D)\cup E$, excluding those already covered by $\Rextra\cup T_0\cup S_I-S_D$, using
halfspaces in $S_0-T_0-S_D$.  We will do so by 
running Algorithm~\ref{algo:2} on these points and halfspaces using the above data structures.
As before, the main subproblems are (i) finding a low-depth point, and (ii) weighted range sampling.

\subparagraph*{Finding a low-depth point.}
We can find the minimum depth of the points of $X_0-A_0-X_D$ (with respect to $R$) as in the small OPT algorithm, by using counters in the partition tree for $X_0-A_0$.  

Since any halfspace in $S_0-T_0-S_D$ contains at most $\OO(\frac n t)$ points of $X_0-A_0-X_D$
by Lemma~\ref{lem:t0}, the cost per insertion to $R$ is reduced from $O(n^{2/3+\delta})$ to $O((\frac nt)^{2/3+O(\delta)})$.
The total cost over all $\OO(t)$ insertions to $R$ is
$O(t (\frac nt)^{2/3+O(\delta)})=O(t^{1/3}n^{2/3+O(\delta)}).$

One technicality is that we should be excluding points already covered by $\Rextra\cup T_0\cup S_I-S_D$.  To fix this, we add $c_0\log n$ copies of $\Rextra\cup T_0\cup S_I-S_D$ to $R$ at the beginning of each round.  This way, points covered by $\Rextra\cup T_0\cup S_I-S_D$ would not be picked as low-depth points.
Adding these copies of $T_0\cup S_I-S_D$ requires no extra effort, since we can initialize $c_v$ with the already computed $c_v^\#$ value, times $c_0\log n$, for all nodes~$v$ encountered.
The copies of the $\OO(t)$ halfspaces of $\Rextra$ can be inserted one by one, in $\OO((\frac nt)^{2/3+O(\delta)})$ time each,
as above.
The extra cost for these $\OO(t)$ insertions to $R$ is bounded as above.

Another technicality that we should be excluding the deleted points in $X_D$ when defining the minimum depth values $d_v$.  When we delete a point from $X$, we can update the $d_v$ values along a path bottom-up in $\OO(1)$ time.

We can find low-depth points of $E$ (with respect to $R$) more naively, by examining the points of $E$, excluding those covered by $\Rextra\cup T_0\cup S_I-S_D$, and testing them one by one in each round (like in Agarwal and Pan's algorithm~\cite{agarwal2014near}).  In line~5, it suffices to use an $O(1)$-approximation to the depth (after adjusting constants in the pseudocode), and as noted in our previous paper~\cite{ChanH20}, we can apply known data structures for 3D halfspace approximate range counting~\cite{AfshaniC09} for the dual points $R^*$; queries and insertions to $R$ take polylogarithmic time.
A point that has been found to have depth larger than the threshold will remain having large depth during a round.  The total cost over all rounds is $\OO(|E|)=\OO(\frac{gn}t)$.

\subparagraph*{Weighted range sampling.}
During a multiplicity-doubling step for a point $p$, we can
generate a multiplicity-weighted sample from the halfspaces containing $p$ in the same way as in the small OPT algorithm, by using counters in the partition tree for $(S_0-T_0-S_D)^*$.  Observe that $p$ has depth in $S_0-T_0-S_D$ at most $O(\frac nt\log n)$ w.h.p., because otherwise, $p$ would be covered by the random sample $\Rextra$ and would have been excluded (in other words, a random sample $\Rextra$ of size $t$
is a $\Theta(\frac {\log n}{t})$-net w.h.p.).
Since the dual halfspace $p^*$ contains at most $\OO(\frac nt)$ points of $S^*$, the cost per multiplicity-doubling step is reduced from $\OO(n^{2/3+\delta})$ to $\OO((\frac nt)^{2/3+O(\delta)})$.
The total cost over all $\OO(t)$ multiplicity-doubling steps is
$\OO(t^{1/3}n^{2/3+O(\delta)}).$

In conclusion, the total time for running MWU is
$\OO(\frac{gn}t + t^{1/3}n^{2/3+O(\delta)})$.  To balance this computation cost with the $\OO(\frac{n+(nt)^{3/4}}g)$ update cost,
we set $g=\frac{t^{7/8}}{n^{1/8}} + \sqrt{t}$
and get a bound of $\OO(\frac{n^{7/8}}{t^{1/8}} +\frac{n}{\sqrt{t}} + t^{1/3}n^{2/3+O(\delta)})$.

}

\APPENDIXMEDIUMOPT


\ignore{
**********

First use shallow partition theorem (Thm.~\ref{thm:2}) on the current point set $\hat{X}$, partition the points into $O(t)$ cells each with size $O(\frac{n}{t})=O(k)$. Then in each shallow partition cell, we recursively build a normal partition tree. Again, we want to support two operations: global minimum query (find the point with the smallest depth) for finding $\epsilon$-light points, and increase all depth counters in a halfspace range by $1$, when we insert a halfspace.

When we insert a halfspace, it has depth $O(k)$ (which is ensured by the preprocessing step), so it may intersect $O(\log n)$ shallow partition cells. In each such cell, we need $\tilde{O}(k^{2/3})$ time to perform halfspace insertion, using partition tree. Global minimum query can be performed in $\tilde{O}(1)$ time. We need to perform $\tilde{O}(t)$ halfspace insertions in total, so the running time is $\tilde{O}(t\cdot k^{2/3})=\tilde{O}(n^{2/3}t^{1/3})$.

To use the shallow partition thm, i.e.\ to ensure that each halfspace has depth $O(k)$, we need Lemma 2 in our paper \cite{ChanH20}, but here we provide a new algorithm that meets our need. The goal is to select a set $T_0$ of $O(t)$ halfspaces and remove the points covered by $T_0$, such that all remaining halfspaces cover $O(k)$ points. $T_0$ will be added to our final solution, and this is an $O(1)$-approximation. To construct $T_0$, whenever there's a deep halfspace $h$ that covers $>\frac{n}{t}$ currently uncovered points in $\hat{X}$, we include it in $T_0$, and arbitrarily delete $\frac{n}{t}$ points covered by $h$ from $\hat{X}$ (explicitly delete). We keep other points covered by $h$ in $\hat{X}$, so $\hat{X}$ may contain points that are already covered by $T_0$. Those points should not participate in our main algorithm, so we maintain another set of counters in the shallow partition+partition tree data structure, marking the points in $\hat{X}$ covered by $T_0$ (implicitly delete). Whenever we insert/delete a halfspace to $T_0$, we need $\tilde{O}(t\cdot k^{2/3})=\tilde{O}(n^{2/3}t^{1/3})$ time to update the second set of counters, because that halfspace may not be shallow and can cross all $O(t)$ shallow cutting cells. When we execute the main algorithm, ignore the points in $\hat{X}$ that are already covered by $T_0$.

We detect such $\Omega(k)$-deep halfspaces during the execution of the main algorithm. Whenever we find a halfspace that crosses $>\Omega(\log n)$ shallow partition cells, which means it has depth $>\Omega(k)$ in $\hat{X}$, we can roll back, include that halfspace in $T_0$, and restart the algorithm. (It's possible that there exist a halfspace that covers $>\frac{n}{t}$ points in $\hat{X}$ but not included in $T_0$, but that's fine, as long as we have the correct running time.)

We perform periodic rebuilding after every $O(t)$ operations, in order to maintain an $O(1)$-approximation of the solution. When we insert a point, arbitrarily choose a halfspace to cover it (by range reporting), the solution size will only increase by $1$. When we delete a point, just ignore this operation and pretend the point is still there, $OPT$ will decrease by at most $1$. When we insert a halfspace, we do nothing. When we delete a halfspace, if it's not in $T_0$, we don't need to do anything. Otherwise we delete it from $T_0$, insert the $\frac{n}{t}$ points covered by it back to $\hat{X}$, and update the second set of counters.

\subparagraph*{Dynamic shallow partition.} Our shallow partition data structure is built on the current point set $\hat{X}$ after removing (part of) the points covered by $T_0$, so we need to make it support dynamic insertion/deletion of points. Recursively build the shallow partition tree, with branching factor $r=n^{\epsilon}$. The recursion depth is $O(\log\log n)$, and the crossing number at each level is $O(\log n)$, so the total crossing number is $O(\log n)^{O(\log\log n)}=2^{O((\log\log n)^2)}=n^{o(1)}$.

Whenever we insert/delete $\frac{b}{r}$ points into a node of the shallow partition tree with size $b$, the depth of halfspaces crossing that node will have additive error $\frac{b}{r}$ (multiplicative error $O(1)$), and we need to rebuild that node, i.e.\ recompute the shallow partition tree in $\tilde{O}(b)$ time and recompute the second set of counters in that node. When we (naively) recompute the second set of counters, for each of the $O(t)$ halfspaces in $T_0$, we need to visit $O(\frac{b}{n/t})$ shallow cutting cells in the lowest level, each cell in $\tilde{O}((\frac{n}{t})^{2/3})$ time. After each operation, we need to insert/delete amortized $O(\frac{n}{t})$ points. The amortized total running time is $\sum_{b: \frac{n}{t}\leq b\leq n}\tilde{O}(\frac{n}{t}\cdot \frac{1}{b}\cdot (b+t\cdot \frac{b}{n/t}\cdot (\frac{n}{t})^{2/3}))=\tilde{O}(\frac{n}{t}+n^{2/3}t^{1/3})$ (ignoring $n^{\epsilon}$ factor).

\subparagraph*{Range sampling.} Another issue is sampling. When we perform a multiplicity doubling step on an $\epsilon$-light point $p$, we need to sample each halfspace (multiplicity included) that covers $p$ with probability $\rho$. We know that an $\epsilon$-light point has depth $\leq k$ in the original set of halfspaces $S$, so we can maintain a dynamic weighted halfspace range sampling data structure (where the weights are multiplicities) using shallow partition+partition tree in the dual (on the set of points $S$, or $S-T_0$), similarly with running time $\tilde{O}(k^{2/3})$ per multiplicity doubling step. Sampling each point (in the dual) takes $\tilde{O}(1)$ time.

After each operation, we need to insert amortized $O(1)$ halfspaces in $T_0$. This means we need to insert/delete $O(k)$ points to the dynamic shallow partition data structure, which takes $\tilde{O}(\frac{n}{t})$ time, and will restart the main algorithm $O(1)$ times in amortization. Periodic rebuilding takes $\tilde{O}(n)$ time, which is amortized $\tilde{O}(\frac{n}{t})$ time per operation. So the total running time is $\tilde{O}(\frac{n}{t}+n^{2/3}t^{1/3})$.
(The first two terms may be improved a bit with more effort, but will not matter in the end.)

In conclusion, the total running time is $\tilde{O}(\frac{n}{t}+n^{2/3}t^{1/3})$.
}

\begin{lemma}\label{lem:medium}
There exists a data structure for the dynamic set cover problem for $O(n)$ upper halfspaces and $O(n)$ points in 3D that maintains
an $O(1)$-approximate solution w.h.p.\  with $\tilde{O}(
\frac{n^{7/8}}{\tiny\OPT^{1/8}} + \frac{n}{\sqrt{\tiny\OPT}}+\OPT^{1/3}n^{2/3+O(\delta)})$ amortized insertion and deletion time.
\end{lemma}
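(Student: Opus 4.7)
The plan is to combine the MWU-based template of Section~\ref{algo:small} with Matou\v sek's \emph{shallow} partition trees~\cite{matousek1992reporting}, using a greedy preprocessing step to make every relevant halfspace shallow, and then to amortize the preprocessing cost by rebuilding periodically in phases of length $g$ (with $g$ chosen at the end).

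First, I would establish a greedy preprocessing lemma: given the current input $(X_0,S_0)$ at the start of a phase, compute a set $T_0\subseteq S_0$ of $O(t)$ halfspaces and a set $A_0=\bigcup_{h\in T_0} A_h\subseteq X_0$ (with $|A_h|=O(n/t)$) so that every halfspace of $S_0-T_0$ contains only $\OO(n/t)$ points of $X_0-A_0$. Examining halfspaces one by one and using Chan's dynamic 3D halfspace range reporting structure to (approximately) count and delete covered points makes this construction run in $\OO(n)$ time. Next, I would store $X_0-A_0$ in a two-level partition tree, where the outer level is the shallow partition tree (so any halfspace containing $\OO(n/t)$ points crosses $O(n^\delta)$ leaf cells), and each leaf's $O(n/t)$ points are organized in an ordinary partition tree of crossing number $O((n/t)^{2/3+\delta})$. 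Shallow halfspaces therefore cross only $O((n/t)^{2/3+O(\delta)})$ cells overall, while general halfspaces cross at most $O(t\cdot (n/t)^{2/3+\delta})$ cells. The dual point set $(S_0-T_0-S_D)^*$ is stored in an analogous tree to enable weighted range sampling.

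To represent the current solution $T_0\cup S_I - S_D$ implicitly, I would augment each node $v$ of the $X$-tree with a counter $c^\#_v$ counting how many halfspaces of $T_0\cup S_I-S_D$ cover $\Gamma_v$ but not $\Gamma_{\PAR(v)}$. These $c^\#_v$ values can be initialized using $O(n)$ simplex range-counting queries on the $O(t)$ dual points of $T_0^*$, costing $\OO(n+(nt)^{3/4})$ per phase, i.e.\ $\OO((n+(nt)^{3/4})/g)$ amortized per update. To compute a solution, I would take an extra unweighted sample $R_{\mathrm{extra}}\subset S_0-T_0-S_D$ of size $t$, add $R_{\mathrm{extra}}\cup T_0\cup S_I-S_D$ to the returned solution (which is a $\Theta((\log n)/t)$-net w.h.p., so every remaining uncovered point has depth $\OO(n/t)$ in $S_0-T_0-S_D$), and then run Algorithm~\ref{algo:2} on the residual instance. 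In the MWU loop, finding a low-depth point uses the augmented $X$-tree exactly as in Section~\ref{algo:small}, but each insertion to $R$ now costs only $\OO((n/t)^{2/3+O(\delta)})$ because the halfspaces of $R$ are shallow; summed over $\OO(t)$ insertions this gives $\OO(t^{1/3}n^{2/3+O(\delta)})$. At the start of each round, the seeding with $c_0\log n$ copies of $T_0\cup S_I-S_D$ is free since we can initialize $c_v$ directly from $c^\#_v$. Weighted range sampling on the dual tree is handled symmetrically for the same cost. The set $E=\bigcup_{h\in T_0\cap S_D}A_h\cup X_I-X_D$ of ``orphaned'' points has only $O(gn/t)$ elements, and they can be scanned naively per round using approximate halfspace range counting on $R^*$, contributing $\OO(gn/t)$.

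Summing: computation cost is $\OO(gn/t + t^{1/3}n^{2/3+O(\delta)})$ and rebuild/update cost is $\OO((n+(nt)^{3/4})/g+t^{1/3}n^{2/3+\delta})$, the second term absorbing the updates to the $c^\#_v$ counters when non-shallow halfspaces are inserted or deleted in $S$. Balancing by choosing $g=t^{7/8}/n^{1/8}+\sqrt{t}$ yields the claimed bound $\OO(n^{7/8}/t^{1/8}+n/\sqrt{t}+t^{1/3}n^{2/3+O(\delta)})$, with $t=\Theta(\OPT)$ found by the standard parallel-guessing trick. I expect the main obstacle to be the deletion bookkeeping: when a halfspace $h\in T_0$ is deleted, the points of $A_h$ must re-enter the instance, which is why the phase structure and the extra set $E$ are introduced, and why the preprocessing lemma must be strengthened to produce an \emph{explicit} decomposition $A_0=\bigcup_{h\in T_0}A_h$. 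Verifying that the shallow partition tree for $(S_0-T_0-S_D)^*$ continues to support deletions in polylogarithmic time, and that the $R_{\mathrm{extra}}$ net argument survives the presence of inserted halfspaces $S_I$ and deleted points $X_D$, are the remaining technical checks.
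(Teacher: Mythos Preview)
Your proposal is correct and follows essentially the same approach as the paper: the greedy preprocessing lemma with the explicit $A_0=\bigcup_{h\in T_0}A_h$ decomposition, the two-level shallow/ordinary partition tree with the stated crossing numbers, the phase-based amortization with the $c^\#_v$ counters initialized via simplex range counting on $T_0^*$, the extra sample $R_{\mathrm{extra}}$ to force all queried dual halfspaces to be shallow, the naive scan of $E$, and the final balancing $g=t^{7/8}/n^{1/8}+\sqrt{t}$ all coincide with the paper's argument. You have also correctly identified the main technical wrinkle (handling deletions of halfspaces in $T_0$ via the $A_h$ bookkeeping and the set $E$).
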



\PAPER{\subsection{Algorithm for Large OPT}\label{algo:large}}
\SOCG{\subsection{Algorithm for large OPT}\label{algo:large}}
Lastly, we give an algorithm for the large OPT case, which is very different from the algorithms in the previous subsections (and not based on modifying MWU\@).
Here,
we can only compute the size of the approximate set cover, not the cover itself.
Like before, we will show that the problem gets easier for large OPT, because we can afford a large additive error. The idea 
is to decompose the problem into subproblems via geometric sampling and planar separators, and then approximate the sum of the subproblems' answers by sampling again.

\subsubsection{Data structures}

We just store the dual point set $S^*$ in a known 3D halfspace range reporting structure.  The data structure by Chan~\cite{Chan10JACM} supports queries in $O((\log n+k)\log n)$ time for output size $k$, and 
insertions and deletions in $S$ in polylogarithmic amortized time.

We store the $xy$-projection of the point set $X$ in a known 2D triangle range searching structure~\cite{matouvsek1992efficient} that supports queries in $O(\frac{n^{1/2+\delta}}{z^{1/2}}+k)$ time for output size~$k$, and
insertions and deletions in $X$ in $\OO(z)$ time for a given trade-off parameter $z\in [1,n]$.

\subsubsection{Approximating the optimal value}

\newcommand{\VD}{\textrm{VD}}

Let $b$ and $g$ be parameters to be set later. Take a random sample $R$ of the halfspaces $S$ with size $\frac{n}{b}$.
Imagine that $R$ is included in the solution. The remaining uncovered space is the complement of the union of $R$, which is a 3D convex polyhedron. There are $O(|R|)=O(\frac{n}{b})$ cells in the vertical decomposition $\VD(R)$ of this polyhedron
(formed by triangulating each face and drawing a vertical wall at each edge of the triangulation).
Each cell is crossed by $O(b\log n)$ halfspaces w.h.p., 
by well-known geometric sampling analysis~\cite{Clarkson87}.  The decomposition $\VD(R)$ can be constructed in $\OO(\frac nb)$ time.



Our key idea is to use planar graph separators to divide into smaller subproblems.
The following is a multi-cluster version of the standard planar separator theorem~\cite{LiptonT80} (sometimes known as ``$r$-divisions''~\cite{federickson1987fast}):

\begin{lemma}[Planar Separator Theorem, Multi-Cluster Version]\label{thm:3}
Given a planar graph $G=(V,E)$ with $n$ vertices, and a parameter $g$, we can partition $V$ into $\frac ng$ subsets $V_1,\cdots,V_{n/g}$ of size $O(g)$ each, and an extra ``boundary set'' $B$ of size $O(\frac n{\sqrt{g}})$, such that no two vertices from different subsets $V_i$ and $V_j$ are adjacent.
The partition can be constructed in $\OO(n)$ time.
\end{lemma}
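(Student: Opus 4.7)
The plan is to derive the multi-cluster separator by invoking the classical Lipton--Tarjan planar separator theorem recursively, in the spirit of Frederickson's construction of $r$-divisions. Recall that every planar graph on $m$ vertices admits a separator of size $c\sqrt{m}$ whose removal leaves two sides of size at most $\tfrac{2}{3}m$ each, and can be computed in $O(m)$ time. Starting from $G$, I would compute such a separator, add its vertices to a growing boundary set $B$, and recurse independently on each of the two sides, stopping the recursion on any subgraph of size at most $g$. The resulting leaf pieces are pairwise non-adjacent in $G - B$, because any edge between two leaves must have been cut by some separator along the way. A single bottom-up traversal of the recursion tree then coalesces the leaf pieces into $n/g$ clusters $V_1, \dots, V_{n/g}$ of size $O(g)$ each; the non-adjacency property is preserved under arbitrary grouping, since every inter-leaf edge still passes through $B$.

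The main technical step, and where most of the care will be needed, is proving $|B| = O(n/\sqrt{g})$. Letting $T(m)$ denote the total boundary produced on a subgraph of size $m$, we obtain the recurrence $T(m) \leq c\sqrt{m} + T(m_1) + T(m_2)$ with $m_1 + m_2 \leq m$ and $m_i \leq \tfrac{2}{3}m$, and $T(m) = 0$ for $m \leq g$. A naive induction of the form $T(m) \leq \alpha m/\sqrt{g}$ does not self-close because of the additive $c\sqrt{m}$ term at every node, and a level-by-level summation only yields $\OO(n/\sqrt{g})$. I would instead prove the strengthened invariant $T(m) \leq \alpha m/\sqrt{g} - \beta\sqrt{m}$ for suitable positive constants $\alpha, \beta$. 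The crucial analytic input is the concavity inequality $\sqrt{m_1} + \sqrt{m_2} \geq \gamma\sqrt{m}$ with $\gamma = \sqrt{1/3} + \sqrt{2/3} > 1$, coming from the $\tfrac{2}{3}$-balance of the separator; the slack $(\gamma - 1)\beta\sqrt{m}$ it creates at each internal node absorbs the $c\sqrt{m}$ cost once $\beta \geq c/(\gamma - 1)$. The boundary cases in which at least one child already falls below $g$ actually force $m = O(g)$ (combining $m_i \leq g$ with the balance condition $m_i \leq \tfrac{2}{3}m$ leaves only constantly many vertices), and are handled by taking $\alpha$ large enough to dominate the residual charges.

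For the running time, since the Lipton--Tarjan separator can be computed in $O(m)$ time on a graph of size $m$, the total size of subgraphs at any fixed recursion depth is at most $n$, and the recursion has depth $O(\log(n/g))$, the overall work is $O(n\log(n/g)) = \OO(n)$. The final cluster-merging pass adds another $O(n)$, completing the construction within the claimed time bound.
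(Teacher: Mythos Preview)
The paper does not actually supply a proof of this lemma; it is stated as a known fact with citations to Lipton--Tarjan~\cite{LiptonT80} and to Frederickson's $r$-divisions~\cite{federickson1987fast}. Your recursive-separator construction is precisely Frederickson's argument, so in spirit you are reproducing exactly what the paper invokes.

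That said, one step in your write-up is not right as stated. You claim that ``the boundary cases in which at least one child already falls below $g$ actually force $m = O(g)$.'' This is false when only \emph{one} child is small: nothing prevents $m_1 = 0$ and $m_2 = \lfloor \tfrac{2}{3}m\rfloor$ with $m$ arbitrarily large. (If \emph{both} children are at most $g$ then indeed $m \le 2g + c\sqrt{m}$, hence $m = O(g)$.) The one-small-child case must instead be closed using the balance condition on the \emph{large} child: from $m_2 \le \tfrac{2}{3}m$ we get $m - m_2 \ge m/3$, and since $m > g$ this yields $\alpha(m-m_2)/\sqrt{g} \ge \tfrac{\alpha}{3}\sqrt{m}$, which absorbs the $(c+\beta)\sqrt{m}$ residual once $\alpha \ge 3(c+\beta)$. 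Your conclusion (``handled by taking $\alpha$ large enough'') is correct, but the justification you gave for it is not. A related minor point: the inequality $\sqrt{m_1}+\sqrt{m_2} \ge \gamma\sqrt{m}$ needs $m_1+m_2$ close to $m$ and both $m_i$ bounded away from $0$; since the separator removes $c\sqrt{m}$ vertices and you are already assuming $m_1,m_2 > g$ in that branch, this holds up to lower-order terms, but it is worth saying explicitly.
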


(We remark that the general idea of combining cuttings/geometric sampling with planar graph separators  appeared in some geometric approximation algorithms before, e.g., \cite{AdamaszekHW19}.)

We apply Lemma~\ref{thm:3} to the dual graph of $\VD(R)$ (which has size $O(\frac{n}{b})$), yielding $O((n/b)/g)$ ``clusters'' of $O(g)$ cells each, and a set $B$ of $O((n/b)/\sqrt{g})$ ``boundary cells'',
in $\OO(\frac nb)$ time.

Let $S_B$ be the subset of all halfspaces of $S$ that cross boundary cells of $B$.  Note that
$|S_B|=O((n/b)/\sqrt{g}\cdot b\log n)=\OO(\frac{n}{\sqrt{g}})$ w.h.p.

For each cluster $\gamma$, let $X_\gamma$ denote the subset of all points of $X$ whose $xy$-projections lie in the $xy$-projection of the cells of $\gamma$, and let $S_\gamma$ denote the subset of all halfspaces of $S$ that cross the cells of $\gamma$.
Note that $|S_\gamma|=O(g\cdot b\log n)=\OO(bg)$ w.h.p.
Let $\OPT_\gamma$ denote the optimal value for the set cover problem for the halfspaces of $S_\gamma$ and the points of $X_\gamma$ not covered by
$R\cup S_B$.

\begin{claim}
$\sum_\gamma \OPT_\gamma$ approximates $\OPT$ with
additive error $\OO(\frac{n}{b}+\frac{n}{\sqrt{g}})$ w.h.p.
\end{claim}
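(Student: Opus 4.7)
The plan is to sandwich $\sum_\gamma\OPT_\gamma$ around $\OPT$ using two feasibility arguments, both hinging on one geometric observation: because every $h\in S$ is an \emph{upper} halfspace while every cell of $\VD(R)$ extends to $z=-\infty$, no cell can be fully contained in $h$, so the set of cells of $\VD(R)$ meeting $h$ is exactly $B_h$ (the cells crossed by $h$). Moreover, $B_h$ is connected in the dual graph of $\VD(R)$, because $h\cap\overline{\bigcup R}$ is convex and any path inside it traverses a sequence of cells that are pairwise adjacent across shared $2$-faces.

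For the \emph{upper bound} $\sum_\gamma\OPT_\gamma\le \OPT+\tilde O(n/\sqrt{g})$, I take an optimal cover $C^*$ of size $\OPT$ and first argue $\OPT_\gamma\le|C^*\cap S_\gamma|$. For any $p\in X_\gamma\setminus(R\cup S_B)$ covered by $h\in C^*$, we have $h\notin R$ (else $R$ covers $p$) and $h\notin S_B$ (else $S_B$ covers $p$); by the observation $B_h$ is connected and $B_h\cap B=\emptyset$, so $B_h$ sits in a single cluster, which must be $\gamma$ (since $p$'s cell is in $\gamma$), giving $h\in S_\gamma$. Summing, $\sum_\gamma\OPT_\gamma\le\sum_{h\in C^*}|\{\gamma:h\in S_\gamma\}|$. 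Each $h\in C^*\setminus S_B$ contributes at most $1$ by the same connectedness argument; each $h\in C^*\cap S_B$ contributes at most $|B_h\cap B|+1$, because removing the $|B_h\cap B|$ boundary vertices from the connected set $B_h$ produces at most $|B_h\cap B|+1$ cluster-components. The excess is bounded by double-counting: $\sum_{h\in S_B}|B_h\cap B|=\sum_{\Delta\in B}|\{h:h\text{ crosses }\Delta\}|\le |B|\cdot\tilde O(b)=\tilde O(n/\sqrt{g})$.

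For the \emph{lower bound} $\sum_\gamma\OPT_\gamma\ge\OPT-\tilde O(n/b+n/\sqrt{g})$, I exhibit a feasible global cover by glueing: $R\cup S_B\cup\bigcup_\gamma C^*_\gamma\cup C_B$, where $C^*_\gamma$ attains $\OPT_\gamma$ and $C_B$ is an arbitrary per-point cover of the leftover points in $X_B\setminus(R\cup S_B)$. Feasibility is immediate from a case split on which cell of $\VD(R)$ a point lies in. The cover has size at most $n/b+|S_B|+\sum_\gamma\OPT_\gamma+|X_B|$, and rearranging $\OPT\le$ (size) gives the lower bound provided $|X_B|=\tilde O(n/\sqrt{g})$.

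The main obstacle is precisely this bound on $|X_B|$: Lemma~\ref{thm:3} as stated controls only the \emph{number} of boundary cells, not the number of input points whose $xy$-projections fall in them, so a worst-case cluster with many heavy cells could inflate $|X_B|$. I would circumvent this by invoking a \emph{weighted} variant of the planar separator theorem on the dual graph of $\VD(R)$, assigning each cell $\Delta$ weight $1+|X\cap\Delta|$ (total weight $O(n)$) and requesting an $r$-division whose regions have weight $O(bg)$; the resulting weighted boundary is $\tilde O(n/\sqrt{bg})\le\tilde O(n/\sqrt{g})$, simultaneously giving $|B|=\tilde O(n/(b\sqrt{g}))$ and $|X_B|=\tilde O(n/\sqrt{g})$. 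A standard minor general-position assumption that $\partial h$ never coincides with a face of $\VD(R)$ is used to make the connectedness of $B_h$ genuine.
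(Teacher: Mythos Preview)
Your key geometric observation—that an upper halfspace cannot contain a downward-infinite cell of $\VD(R)$, so the cells it meets are exactly the cells it crosses, and these form a connected set in the dual graph—is correct and is exactly the idea the paper uses. But you then overcomplicate both directions and, in doing so, introduce one wrong step and one phantom obstacle.

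\textbf{Upper bound.} Your own argument for $\OPT_\gamma\le|C^*\cap S_\gamma|$ already establishes that the covering halfspace $h$ satisfies $h\notin R\cup S_B$. So the sharper bound $\OPT_\gamma\le|(C^*\setminus(R\cup S_B))\cap S_\gamma|$ holds, and since every such $h$ lies in at most one $S_\gamma$ (by your connectedness argument), summing gives $\sum_\gamma\OPT_\gamma\le|C^*\setminus(R\cup S_B)|\le\OPT$ with no additive error at all. This is what the paper does. Your detour through bounding the multiplicity of $h\in C^*\cap S_B$ is unnecessary, and the specific claim that removing $k$ vertices from a connected (planar) graph leaves at most $k+1$ components is false (remove the center of a star).

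\textbf{Lower bound.} Your ``main obstacle'' is not an obstacle. Apply your own geometric observation once more: if $p$ lies in a boundary cell $\Delta\in B$ and $p$ is covered by some $h\in S$, then $h$ meets $\Delta$, hence crosses $\Delta$, hence $h\in S_B$. So every point of $X$ in a boundary cell that is coverable at all is already covered by $S_B$; the set $C_B$ is empty and there is nothing to bound about $|X_B|$. The paper's feasible global cover is simply $R\cup S_B\cup\bigcup_\gamma C^*_\gamma$, of size $\frac{n}{b}+\OO(\frac{n}{\sqrt{g}})+\sum_\gamma\OPT_\gamma$, and no weighted separator is needed.
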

\begin{proof}
A feasible solution can be formed by
taking the union of the solutions corresponding to $\OPT_\gamma$, together with $R$ (to cover points not covered by $\VD(R)$) and $S_B$ (to cover points inside boundary cells).  As $|R|=\frac nb$ and $|S_B|=\OO(\frac{n}{\sqrt{g}})$ w.h.p.,
this proves that $\OPT\le \sum_\gamma\OPT_\gamma + \OO(\frac{n}{b}+\frac{n}{\sqrt{g}})$.

In the other direction,
observe that if a halfspace $h$ crosses two different clusters $\gamma_i$ and $\gamma_j$, it must also cross some boundary cell in $X$
by convexity: pick points $p\in h\cap\gamma_i$ and $q\in h\cap\gamma_j$; then the line segment $\overline{pq}$ must hit the wall of some boundary cell.
So, after removing  $R\cup S_B$ from the global optimal solution, we get
disjoint local solutions in the clusters.  This proves that
$\OPT\ge
\sum_\gamma\OPT_\gamma$.
\end{proof}



We use the following known fact about approximating 
a sum via random sampling (which is of course a standard trick):

\begin{lemma}\label{lem:chernoff}
Suppose $a_1+\cdots + a_m=T$ where $a_i\in [0,U]$.
Take a random subset $R$ of $r=\frac{(c_0/\eps^2)mU\log n}{T}$ elements from $a_1,\ldots,a_m$, for a sufficiently large constant $c_0$.
Then $\sum_{a_i\in R}a_i\cdot \frac rm$ is a $(1+\eps)$-approximation to $T$ w.h.p.
\end{lemma}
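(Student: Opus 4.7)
The plan is to apply a multiplicative Chernoff--Hoeffding tail bound to the natural unbiased estimator of $T$. Let $\hat T := \frac{m}{r}\sum_{a_i\in R} a_i$. Since $R$ is a uniformly random $r$-subset of the $m$ elements, each $a_i$ lies in $R$ with probability exactly $r/m$, so $\E[\hat T]=T$ by linearity. The claim therefore reduces to showing $|\hat T-T|\le \eps T$ w.h.p.

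To normalize, set $Y_i := \mathbf{1}[a_i\in R]\cdot a_i/U \in [0,1]$ and $Y := \sum_{i=1}^m Y_i$, so that $\mu := \E[Y] = Tr/(mU)$ and $\hat T = (mU/r)\,Y$. Plugging in the prescribed value $r=(c_0/\eps^2)mU\log n/T$ gives $\mu = (c_0/\eps^2)\log n$. It therefore suffices to bound $\Prob[\,|Y-\mu|>\eps\mu\,]$, since a $(1\pm\eps)$-approximation of $Y$ scales to a $(1\pm\eps)$-approximation of $T$.

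I would then invoke the standard multiplicative Chernoff bound $\Prob[\,|Y-\mu|>\eps\mu\,] \le 2\exp(-\eps^2\mu/3)$, valid for $\eps\in(0,1]$ when the $Y_i\in[0,1]$ behave like independent (or negatively associated) Bernoulli-weighted variables. Substituting $\mu = (c_0/\eps^2)\log n$ yields $2\exp(-c_0\log n/3)$, which is at most $n^{-c}$ for any desired constant $c$ once $c_0$ is taken sufficiently large (depending on $c$). Rescaling back by $mU/r$ gives $|\hat T-T|\le \eps T$ with the claimed high probability.

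The only mild subtlety, rather than a real obstacle, is that $R$ is drawn without replacement, so the $Y_i$'s are not independent. This is handled by Hoeffding's classical observation that for sampling without replacement from a finite population, the moment generating function of $Y$ is dominated by that of a sum of $r$ i.i.d.\ draws with the same marginal law (equivalently, the $Y_i$'s are negatively associated). Hence the multiplicative Chernoff inequality applies unchanged and the routine calculation above goes through, completing the proof. A corner case in which $r\ge m$ is trivial, as one may simply take $R$ to be the full set and compute $T$ exactly.
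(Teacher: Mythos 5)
Your proof is correct and follows essentially the same route as the paper's: normalize by $U$, observe that the expected (rescaled) contribution is $\mu=(c_0/\eps^2)\log n$, and apply a multiplicative Chernoff bound. The one place you are more careful than the paper is the sampling model: the paper simply defines each $Y_i$ to be $a_i$ with probability $r/m$ independently (i.e.\ it implicitly uses Bernoulli sampling and does not comment on dependence), whereas you explicitly interpret $R$ as a fixed-size subset drawn without replacement and justify the Chernoff inequality via Hoeffding's domination argument / negative association, and you also dispose of the $r\ge m$ corner case. These additions are welcome but do not change the substance of the argument.
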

\begin{proof}
By rescaling the $a_i$'s and $T$ by a factor $U$, we may assume that $U=1$.
Define a random variable $Y_i$, which is $a_i$ with probability $\frac rm$, and 0 otherwise.
Then $E[\sum_i Y_i]=T\cdot \frac rm = (c_0/\eps^2)\log n$.  The result follows from a standard Chernoff bound on the $Y_i$'s.
\end{proof}

By applying the above lemma with $m=(n/b)/g$, $T=\Theta(t)$, and $U=\OO(bg)$ (assuming $\OPT$ is finite), we can $O(1)$-approximate $\OPT$
by summing $\OPT_\gamma$
over a random sample of
$r = \OO(\frac{mU}{T})=\OO(\frac nt)$
clusters $\gamma$.

We can generate the set $S_B$ by
finding the halfspaces of $S$ that contain the $O((n/b)/\sqrt{g})$ vertices of the cells in $B$---this corresponds to $O((n/b)/\sqrt{g})$ halfspace range reporting queries for the dual 3D point set $S^*$, each with output size $\OO(b)$ w.h.p.\ and each taking $\OO(b)$ time~\cite{Chan10JACM}.
Thus, $S_B$ can be found in $\OO(\frac n{\sqrt{g}})$ time.
We compute the union of $R\cup S_B$, which is the complement of an intersection of halfspaces, by the dual of 3D convex hull algorithm~\cite{BerBOOK}.
This takes $\OO(\frac nb + \frac n{\sqrt{g}})$ time.

For each chosen cluster $\gamma$,
we can generate $S_\gamma$ similarly by $O(g)$ halfspace range reporting queries for $S^*$, each with output size $\OO(b)$ w.h.p.
Thus, $S_\gamma$ can be found in $\OO(bg)$ time.
We can generate $X_\gamma$ by 
performing $O(g)$ triangle range reporting queries for the 2D $xy$-projection of the point set $X$.
Thus, $X_\gamma$ can be found in $\OO(g\frac{n^{1/2+\delta}}{z^{1/2}}+ |X_\gamma|)$ time.
We filter points of $X_\gamma$ covered by $R\cup S_B$, by performing $|X_\gamma|$ planar point location queries~\cite{BerBOOK} in the $xy$-projection
of the boundary of the union of $R\cup S_B$.  This takes
$\OO(|X_\gamma|)$ time.
We can then compute an $O(1)$-approximation to $\OPT_\gamma$ by running a known static set cover algorithm~\cite{agarwal2014near,ChanH20} in $\OO(bg+|X_\gamma|)$ time.

The expected sum of $X_\gamma$ over all chosen clusters $\gamma$ is $O(n\cdot \frac rm)=O(rbg)$.
The total expected time over $r$ clusters is
$\OO(rbg + rg\frac{n^{1/2+\delta}}{z^{1/2}})
=\OO(\frac{bgn}t + \frac{gn^{3/2+\delta}}{tz^{1/2}})$.
The overall expected running time
is $\OO(\frac n b + \frac{n}{\sqrt{g}} + \frac{bgn}t + \frac{gn^{3/2+\delta}}{tz^{1/2}})$, and we obtain an
$(O(1),\OO(\frac nb + \frac n{\sqrt{g}}))$-approximation.
(The expected bound can be converted to worst-case by placing a time limit and re-running logarithmically many times.)
Choosing $g=b^2$ yields the following result:



\begin{lemma}
Given parameters $b$ and $z\in [1,n]$ and any constant $\eps>0$, there exists a data structure for the dynamic set cover problem for
$O(n)$ upper halfspaces and $O(n)$ points in 3D that supports insertions and deletions in $\OO(z)$ amortized time and can find the value of an $(O(1),\OO(\frac nb))$-approximation w.h.p.\ in 
$\OO(\frac nb + \frac{b^3n}{\tiny \OPT} + \frac{b^2n^{3/2+\delta}}{{\tiny \OPT}\cdot z^{1/2}})$ time for any constant $\delta>0$.
\end{lemma}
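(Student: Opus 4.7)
The plan is to establish the lemma by combining two standard dynamic black boxes for the data structures with the sampling-plus-separator scheme developed in the discussion above for the query. For the data structures, I would store $S^*$ in Chan's dynamic 3D halfspace range reporting structure and the $xy$-projection of $X$ in Matou\v{s}ek's 2D triangle range searching structure with trade-off parameter $z$; both support insertions and deletions in $\OO(z)$ amortized time, which immediately yields the claimed update bound.

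For the query, I would draw a random sample $R\subseteq S$ of size $n/b$, compute the vertical decomposition $\VD(R)$ in $\OO(n/b)$ time, and apply the multi-cluster planar separator theorem to its dual graph to obtain $n/(bg)$ clusters of $O(g)$ cells together with a boundary $B$ of $O((n/b)/\sqrt{g})$ cells, where $g$ is a parameter. The correctness of this decomposition is exactly the Claim established above: $|\OPT-\sum_\gamma \OPT_\gamma|$ is $\OO(n/b+n/\sqrt{g})$ w.h.p., with the upper direction following from the disjoint union of local solutions together with $R\cup S_B$, and the lower direction using convexity---any halfspace crossing two distinct clusters must also hit a boundary cell, so removing $R\cup S_B$ disjointifies any global optimum into the clusters.

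To avoid paying for all $m=n/(bg)$ clusters, I would invoke the sum-estimation lemma with $T=\Theta(\OPT)$ and $U=\OO(bg)$, reducing the evaluation to a random sub-sample of $r=\OO(n/\OPT)$ clusters. For each chosen $\gamma$ I would generate $S_\gamma$ via $O(g)$ halfspace reporting queries in $S^*$, generate $X_\gamma$ via $O(g)$ triangle queries in the 2D structure, filter $X_\gamma$ against $R\cup S_B$ by planar point location on the $xy$-projection of the union of $R\cup S_B$ (constructed once by a dual 3D convex hull), and then apply a known static $O(1)$-approximation set cover routine on $(S_\gamma,X_\gamma)$. Adding the setup cost $\OO(n/b+n/\sqrt{g})$, the per-cluster cost $\OO(bg+g n^{1/2+\delta}/z^{1/2}+|X_\gamma|)$, and using $\E[\sum_\gamma |X_\gamma|]=O(rbg)$, the total becomes $\OO(n/b+n/\sqrt{g}+bgn/\OPT+g n^{3/2+\delta}/(\OPT\cdot z^{1/2}))$. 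Choosing $g=b^2$ absorbs $n/\sqrt{g}$ into $n/b$ and produces exactly the stated bound.

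The main thing to be careful with is ensuring the three random events---the halfspace sample $R$ giving $O(b\log n)$ crossings per cell of $\VD(R)$, the boundary-crossing bound $|S_B|=\OO(n/\sqrt{g})$, and the Chernoff concentration in the sum-estimation step---all hold simultaneously w.h.p., which is handled by a union bound. A minor additional step is converting the expected running time coming out of the sum-estimation lemma into a worst-case high-probability bound, which I would do by the standard trick of imposing a time limit and re-running $O(\log n)$ independent trials.
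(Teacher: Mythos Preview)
Your proposal is correct and follows essentially the same approach as the paper: the same two data structures (Chan's dynamic halfspace reporting for $S^*$ and Matou\v{s}ek's 2D triangle structure with trade-off $z$ for the projected $X$), the same random sample $R$ with $\VD(R)$, the multi-cluster planar separator on its dual graph, the same Claim for the additive error, the same sum-estimation lemma to reduce to $r=\OO(n/\OPT)$ clusters, and the final choice $g=b^2$. Your handling of the high-probability guarantees via a union bound and of the expected-to-worst-case conversion via bounded re-runs also matches the paper.
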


A minor technicality is that when applying Lemma~\ref{lem:chernoff},
 we have assumed that the optimal value is finite. 
The problem of checking whether a solution exists, i.e.,
whether a point set is covered by a set of halfspaces (or more generally, maintaining the lowest-depth point), subject to insertions and deletions of points and halfspaces, has already been solved before by Chan \cite[Theorem~4.1]{Chan03sicomp},
who gave a fully dynamic algorithm with $\tilde{O}(n^{2/3})$ time per operation (based on augmenting partition trees with counters, similar to what we have done here).



\ignore{
\subparagraph*{Alternative (only for oblivious adversary, but faster).} Each random cluster contains $O(\frac{n}{\frac{n}{bg}})=O(bg)$ points in expectation. The running time function is concave, so by Jensen's inequality, the worst case happens when each cluster contains $\tilde{O}(gb)$ points. The expected running time for solving a cluster is $\tilde{O}(gb)$ (by the static set cover algorithm, and we can check whether a point in the cluster has already been covered in $\tilde{O}(1)$ time, by constructing the lower envelope of the sample $R$ and all halfspaces intersecting the boundary, and then performing a point location query).

In an update, if we insert/delete a point, we do nothing. If we insert/delete a halfspace, and if the sample $R$ changes, we need to recompute the whole separator from scratch in $\tilde{O}(n)$ (or $\tilde{O}(\frac{n}{b})$? using the sublinear time algorithm) time. This case happens with probability $\frac{|R|}{n}=O(\frac{1}{b})$ (assuming an oblivious adversary), so the expected running time is $\tilde{O}(\frac{n}{b})$. Otherwise if the halfspace intersects the boundary, then we need to insert/delete it from the solution.

Set $b=\tilde{O}(\max\{\frac{t^{2/3}}{n^{1/3}},\frac{n}{t}\})$ and $g=\tilde{O}(\frac{n^2}{t^2})$. The total expected amortized running time is $\tilde{O}(\frac{n}{b}+b\cdot gb)$. When $t\geq n^{4/5}$, $b=\tilde{O}(\frac{t^{2/3}}{n^{1/3}})$, the running time is $\tilde{O}(\frac{n^{4/3}}{t^{2/3}})$, which is always sublinear (when $t>n^{1/2}$). When $t<n^{4/5}$, $b=\frac{n}{t}$, the running time is $\tilde{O}(t+\frac{n^4}{t^4})$, which is sublinear when $t>n^{3/4}$.

\subparagraph*{Sublinear time algorithm.}

Alternatively, after each update operation, we can take a new sample $R$

 and compute the separator from scratch in $\tilde{O}(\frac{n}{b})$ time. For a cell in $VD(R)$, we can find all halfspaces intersecting that cell using range searching data structures (each cell is represented by a facet with $3$ points, and we can use halfspace range reporting in the dual), in $\tilde{O}(b)$ time. The sample $H$ contains $\tilde{O}(\frac{n}{t})$ clusters, i.e.\ $\tilde{O}(\frac{ng}{t})$ cells, so finding all halfspaces intersecting $H$ takes $\tilde{O}(\frac{ngb}{t})$ time. Similarly, we can find all halfspaces intersecting the boundary in $\tilde{O}(\frac{n}{\sqrt g})$ time (actually, in order to remove the points that are already covered by halfspaces that cross the boundary, we only need to report the halfspaces that cross the boundaries of the sampled clusters, in at most $\tilde{O}(\frac{ngb}{t})$ time). We can find all points contained in $H$ in expected $\tilde{O}(\frac{ng}{t}\cdot \sqrt{n}+\frac{ngb}{t})$ time using triangle range reporting \cite{}. The total running time is $\tilde{O}(\frac{n}{b}+\frac{ngb}{t}+\frac{ng}{t}\cdot \sqrt{n})$.

Set $b=\sqrt{n}$ (which is fine when $t>\sqrt{n}$), and set $g=\frac{n^2}{t^2}$ (which is fine when $t\geq n^{3/4}$, since we require $g\leq O(\frac{n}{b})$). The total running time is $\tilde{O}(\frac{n^{7/2}}{t^3})$, which is monotone decreasing in $t$. When $t\leq n^{\frac{17}{20}}$, we switch to the medium OPT case, which has running time $\tilde{O}(\frac{n}{t}+n^{2/3}t^{1/3})$. So the final worst-case running time is $\tilde{O}(n^{19/20})$.

This algorithm allows us to solve the adaptive online adversary setting.

\subparagraph*{Remark.} We can subdivide each cell to ensure each cell contains $O(b)$ points. running time if we want to compute this each time? approximate range counting?

Can we combine separators with partition, such that each inserted disk intersects a small number of clusters?

Can we use the dynamic solution to recurse? Yes, for each cluster we can allow $\tilde{O}(g)$ additive error, so we don't need to delete points after inserting a new halfspace to the boundary. Instead, when we recurse inside we are allowed to use that halfspace. Deleting a halfspace from the boundary will happen with a small probability ($\tilde{O}(\frac{|\partial V_i|\cdot b}{n})=\tilde{O}(\frac{b\sqrt{g}}{n})$), assuming an oblivious adversary.

}

\subparagraph*{Combining the algorithms.} Finally, we combine all three algorithms:

\begin{enumerate}
\item 
When $\OPT\le n^{2/9}$, we use the algorithm for small OPT;
the running time is 
$\OO(\OPT\cdot n^{2/3+\delta})\le \OO(n^{8/9+\delta})$.
\item 
When $n^{2/9}<\OPT\leq n^{10/13}$, we use the algorithm for medium OPT;
the running time is
$\OO(
\frac{n^{7/8}}{\tiny\OPT^{1/8}} + \frac{n}{\sqrt{\tiny\OPT}}+\OPT^{1/3}n^{2/3+O(\delta)})
\le O(n^{12/13+O(\delta)}).$
\item
When $\OPT > n^{10/13}$, we use the algorithm for large OPT with $b=\tilde{\Theta}(n^{3/13})$ and $z=n^{7/13}$, so that an $(O(1),\OO(\frac nb))$-approximation is indeed an $O(1)$-approximation; the running time is $\OO(\frac nb + \frac{b^3n}{\tiny\OPT} + \frac{b^2n^{3/2+\delta}}{{\tiny\OPT}\cdot z^{1/2}})\le O(n^{12/13+O(\delta)})$.
\end{enumerate}

\ignore{
When  When $t>n^{\frac{10}{13}}$, we use the algorithm f The final running time is $\tilde{O}(\max\{n^{5/6},n^{12/13}\})=\tilde{O}(n^{12/13})$.

Remark. We can bootstrap. Suppose the final running time is $O(n^{\alpha})$. (since $b<bg$, we will rebuild the clusters early.) When we recurse in a cluster, we kind of know which case (small/medium/large OPT) we are in, because we can terminate when the total size of the optimal solutions in the sample reaches $\tilde{\Omega}(gb)$. (do we still need to maintain the data structures in the other cases?)

We can get a truly sublinear time algorithm for all ranges of OPT. (?)
}

\begin{theorem}\label{thm:halfspaces}
There exists a 
data structure for the dynamic set cover problem for $O(n)$ upper halfspaces and $O(n)$ points in 3D
that maintains the value of an $O(1)$-approximate solution w.h.p.\  with $O(n^{12/13+\delta})$ amortized insertion and deletion time for any constant $\delta>0$.
\end{theorem}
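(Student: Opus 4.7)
The plan is to combine the three dynamic data structures developed in Sections~\ref{algo:small}, \ref{app:medium}, and \ref{algo:large} via the parallel-guessing scheme introduced at the beginning of this section. Since $\OPT$ is not known in advance, I would run the three algorithms in parallel for each geometric guess $t=2^i$ with $i=0,1,\ldots,\lceil\log n\rceil$, multiplying the update time by only an $O(\log n)$ factor. After each update the outer layer inspects the candidate outputs and returns the one corresponding to a correct guess: when the guess is wrong, the underlying MWU-based routines (and the approximate sum-sampling routine in the large-$\OPT$ case) can flag failure w.h.p.\ by detecting that they did not terminate within their expected bounds, so valid outputs are distinguishable from invalid ones.

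For each guess $t$ I would then assign the appropriate regime. For $t\le n^{2/9}$, use the small-$\OPT$ structure from Section~\ref{algo:small}, giving update time $\OO(t\cdot n^{2/3+\delta})\le \OO(n^{8/9+\delta})$. For $n^{2/9}<t\le n^{10/13}$, use the medium-$\OPT$ structure of Lemma~\ref{lem:medium}. For $t>n^{10/13}$, use the large-$\OPT$ structure of Section~\ref{algo:large} with $b=\tilde\Theta(n^{3/13})$ and $z=n^{7/13}$, so that the additive error $\OO(n/b)=\OO(n^{10/13})$ is $O(\OPT)$ and the $(O(1),\OO(n/b))$-approximation becomes an $O(1)$-approximation.

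The verification reduces to checking that every term in every regime is bounded by $n^{12/13+O(\delta)}$. In the small regime this is immediate since $8/9<12/13$. In the medium regime, the term $n/\sqrt t$ attains its maximum $n^{8/9}$ at the left endpoint $t=n^{2/9}$, the term $n^{7/8}/t^{1/8}$ is dominated by it, and the term $t^{1/3}n^{2/3+O(\delta)}$ attains its maximum $n^{12/13+O(\delta)}$ tightly at $t=n^{10/13}$. In the large regime, with the chosen $b$ and $z$, a routine calculation yields $n/b=n^{10/13}$, $b^3n/t\le n^{12/13}$ at $t=n^{10/13}$ (and smaller for larger $t$), and $b^2 n^{3/2+\delta}/(t\,z^{1/2})\le n^{12/13+\delta}$ at $t=n^{10/13}$, while the update cost $\OO(z)=\OO(n^{7/13})$ is absorbed.

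The main obstacle I expect is not the algebra of this balancing but rather the bookkeeping across the $O(\log n)$ parallel guesses: the medium-$\OPT$ structure from Lemma~\ref{lem:medium} relies on periodic rebuilds tied to a specific $t$, and the large-$\OPT$ structure fixes $b$ and $z$ based on $t$ as well. Since each guess runs as an independent instance, this is absorbed into the $O(\log n)$ overhead; wrong guesses either self-report failure w.h.p., or yield candidate solutions that are visibly much larger than $O(t)$, so the outer layer can safely return the smallest valid candidate, completing the proof.
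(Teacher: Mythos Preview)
Your proposal is correct and follows essentially the same approach as the paper: the same three-regime split at thresholds $n^{2/9}$ and $n^{10/13}$, the same parameter choices $b=\tilde\Theta(n^{3/13})$ and $z=n^{7/13}$ in the large-$\OPT$ regime, and the same balancing arithmetic. The only difference is that you are more explicit about the parallel-guessing bookkeeping and about how wrong guesses are detected, which the paper treats as understood from the setup at the start of the section.
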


\subsection{Upper and Lower Halfspaces}\label{sec:upperlower}

\subparagraph*{Small and medium OPT.}
It is straightforward to modify the small and medium OPT
algorithms in Sections~\ref{algo:small} and \ref{app:medium}
to handle the case when there are both upper and lower halfspaces in $S$.
For weighted range sampling, we can handle the upper halfspaces and the lower halfspaces separately; for example, we build the partition tree for the dual points of $S$ separately for the upper and the lower halfspaces.
To find low-depth points, we use just one partition tree for the points of $X$, where depth is defined relatively to the combined set of upper and lower halfspaces.

\subparagraph*{Large OPT.}
In the large OPT algorithm, we can no longer use the vertical decomposition.  Instead, we pick a point $p_0$ inside the polyhedron and consider a ``star'' triangulation of the polyhedron where all tetrahedra have $p_0$ as a vertex.  Because we can no longer use $xy$-projections, naively we would need to replace 2D triangle range searching with 3D simplex range searching, which would increase the update time slightly.  

If $p_0$ is fixed, we can replace the orthogonal $xy$-projection with a perspective projection with respect to $p_0$, and we can still use 2D triangle range searching.  We describe a way to find a point $p_0$ that stays fixed for a number of updates.  (Note that $p_0$ need not be in $X$.)

Specifically, we divide the update sequence into phases with $\frac nb$ updates each.
At the beginning of each phase, we set $p_0$ to be a point of minimum depth with respect to $S$, among all points in $\mathbb{R}^3$; an $O(1)$-approximation is fine and can be found in $\OO(n)$ randomized time~\cite{AfshaniC09,AronovH08}.

If $p_0$ has depth at least $\frac {2n}b$ at the beginning of the phase, the minimum depth  is at least $\frac nb$ during the entire phase, and
a $(\frac 1b)$-net of size $\OO(b)$ is a set cover and can be generated by random sampling; trivially, this gives an $(O(1),\OO(\frac nb))$-approximation, assuming $b\le \sqrt{n}$.  

Otherwise, $p_0$ has depth $O(\frac nb)$ during the entire phase.
In the large OPT algorithm, we let $Z_0$ be the set of $O(\frac nb)$ halfspaces containing $p_0$ (which can be found by halfspace range reporting in the dual), include $Z_0$ in the solution, and remove $Z_0$ from $S$ before taking the random sample $R$.  As a result, the complement of the union of $R$ indeed contains $p_0$.  The rest of the algorithm is similar, using a perspective projection from $p_0$ instead of $xy$-projection.  (Points covered by $Z_0$ should be excluded, and we can do so by adding $Z_0$ to $S_B$.)
The additive error increases by $O(\frac nb)$, and so is asymptotically unchanged.

The data structures have preprocessing time $\OO(nz)$.  Since we rebuild after every $\frac nb$ updates, the amortized update cost is $\OO(bz)$.
In our application with $b=\tilde{\Theta}(n^{3/13})$ and $z=n^{7/13}$, this cost does not dominate.

\ignore{

********

In the large OPT algorithm, we replace the vertical decomposition with another standard decomposition, the bottom-vertex triangulation~\cite{Clarkson87}.
The computation of the sets $X_\gamma$ now requires 3D simplex range searching (we can't work with the $xy$-projection in 2D).  The query time for 3D simplex range searching~\cite{matouvsek1992efficient} is
$O(\frac{n^{2/3+\delta}}{z^{1/3}})$, for trade-off parameter $z\in [1,n^2]$.
As a result, the time bound increases 
to $\OO(\frac nb + \frac{b^3n}{\tiny \OPT} + \frac{b^2n^{5/3+\delta}}{{\tiny \OPT}\cdot z^{1/3}})$.

\subparagraph*{Combining the algorithms.}
When $n^{2/9}<\OPT\le n^{25/31}$, we use the algorithm for medium OPT; the running time is 
the running time is
$\OO(
\frac{n^{7/8}}{\tiny\OPT^{1/8}} + \frac{n}{\sqrt{\tiny\OPT}}+\OPT^{1/3}n^{2/3+O(\delta)})
\le O(n^{29/31+O(\delta)}).$
When $\OPT > n^{25/31}$, we use the algorithm for large OPT with $z=n^{29/31}$ and $b=\tilde{\Theta}(n^{6/31})$, so that an $(O(1),\OO(\frac nb))$-approximation is indeed an $O(1)$-approximation; the running time is $\OO(\frac nb + \frac{b^3n}{\tiny\OPT} + \frac{b^2n^{5/3+\delta}}{{\tiny\OPT}\cdot z^{1/3}})\le O(n^{29/31+O(\delta)})$.

\begin{theorem}
There exists a 
data structure for the dynamic set cover problem for $O(n)$ upper and lower halfspaces and $O(n)$ points in 3D
that maintains the value of an $O(1)$-approximate solution w.h.p.\  with $O(n^{29/31+\delta})$ amortized insertion and deletion time for any constant $\delta>0$.
\end{theorem}

}

\begin{theorem}
Theorem~\ref{thm:halfspaces} holds even when there are both upper and lower halfspaces.
\end{theorem}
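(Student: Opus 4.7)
The plan is to extend the three-regime algorithm behind Theorem~\ref{thm:halfspaces} to allow lower halfspaces as well, reusing the small and medium OPT machinery with minor tweaks and redesigning only the pieces of the large OPT algorithm that exploit the one-sidedness of upper halfspaces.

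For the small OPT (Section~\ref{algo:small}) and medium OPT (Section~\ref{app:medium}) algorithms, I would split $S$ by orientation and maintain two separate partition trees on the dual points of the upper and the lower halfspaces; the weighted range sampling subroutine then runs independently on the two trees, and the multiplicity-weighted samples from the two orientations are disjoint and can simply be merged. The low-depth-point subroutine, in contrast, still uses a single partition tree on $X$ in the primal, with counters tracking depth against the union of the orientation-wise samples of $R$. Each insertion into $R$ updates the tree exactly as before, so the $O(n^{2/3+\delta})$ and $O((n/\OPT)^{2/3+O(\delta)})$ per-insertion bounds carry over, and the overall time bounds are unchanged.

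The main obstacle is the large OPT algorithm of Section~\ref{algo:large}, which relied on (a) the vertical decomposition $\VD(R)$ of the complement of $\bigcup R$ and (b) the $xy$-projection to reduce $X_\gamma$-extraction to planar triangle range searching—both features special to upper halfspaces. My plan is to replace $\VD(R)$ by a star triangulation from an interior point $p_0$: triangulate each face of the complement polyhedron and cone each facet-triangle back to $p_0$, producing $O(|R|)$ tetrahedra each sharing $p_0$ as a vertex. A perspective projection centered at $p_0$ sends each such tetrahedron to a 2D triangle, so if $p_0$ is fixed and $X$ is reorganized in a 2D triangle range searching structure under this perspective projection, the $X_\gamma$-queries stay planar and the original query/update trade-offs are preserved. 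The remaining pieces—halfspace range reporting for $S_B$ and $S_\gamma$, planar separators on the dual graph of the tetrahedralization, and the Chernoff-based sum approximation—carry over verbatim.

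To keep $p_0$ stable long enough to amortize building the perspective-projected data structure, I would partition the update stream into phases of $\frac{n}{b}$ updates. At the start of each phase, I approximate a minimum-depth point $p_0\in\mathbb{R}^3$ relative to $S$ in $\OO(n)$ randomized time via~\cite{AfshaniC09,AronovH08}. If this depth is at least $\frac{2n}{b}$, then every point of $\mathbb{R}^3$ retains depth $\Omega(\frac{n}{b})$ throughout the phase, so a random $(1/b)$-net of size $\OO(b)$ (with $b\le\sqrt{n}$) is already a set cover and yields an $(O(1),\OO(n/b))$-approximation for free. Otherwise $p_0$ has depth $O(n/b)$ throughout the phase; I extract the $O(n/b)$ halfspaces $Z_0\ni p_0$ by halfspace range reporting in the dual, include $Z_0$ in the reported solution and in the boundary set $S_B$ (so their covered points are excluded), and delete $Z_0$ from $S$ before drawing $R$—then $p_0$ lies in the complement of $\bigcup R$ and the star triangulation is well-defined. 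The additive error grows by only $O(n/b)$, absorbed by the existing $(O(1),\OO(n/b))$ guarantee. With the original parameters $b=\tilde\Theta(n^{3/13})$ and $z=n^{7/13}$, the $\OO(nz)$ rebuild cost of the perspective-projected triangle range searching structure amortizes to $\OO(bz)$ per update, which is dominated by the existing $O(n^{12/13+\delta})$ bound.
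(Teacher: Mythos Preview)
Your proposal is correct and follows essentially the same approach as the paper: the split-by-orientation handling of the small and medium OPT algorithms, the star triangulation from a stable interior point $p_0$ with perspective projection replacing the $xy$-projection, the $\frac{n}{b}$-length phases with the depth-$\frac{2n}{b}$ threshold, the $Z_0$ extraction and its inclusion in $S_B$, and the $\OO(bz)$ amortized rebuild cost all match the paper's argument in Section~\ref{sec:upperlower}.
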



\section{Improving Static Set Cover}\label{sec:static}

In this last section, we show how the techniques we have developed for
the dynamic geometric set cover problem can lead to a randomized algorithm for static set cover for 3D halfspaces running in $O(n\log n)$ time, which is optimal and improves our previous $O(n\log n(\log\log n)^{O(1)})$ randomized algorithm \cite{ChanH20}.  The new algorithm combines the medium OPT algorithm in Section~\ref{app:medium} and the large OPT algorithm in Section~\ref{algo:large}.

Let $N$ be a fixed parameter used to control the error probability.



\ignore{
For the large OPT case (when $t>\Omega(n^{1-\epsilon})$), preprocessing takes $O(n\log n)$ time (computing separators).

If $t>\tilde{O}(n^{1/2+\epsilon})$, we can use the algorithm for large OPT in Sec.\ \ref{algo:large} to estimate the size of the optimal solution within a constant factor, so that we can avoid binary search.



For the large OPT case, since we want to output the actual solution, we need to recurse in each cluster. Set $b=O(\frac{n}{t\log n})$ and $g=O(\frac{n^2}{t^2\log^2 n})$, i.e.\ $b,g=\log^{O(1)}n$.

The additive error of the computed solution satisfies the following recurrence: $S(n)=\frac{n}{bg}\cdot S(bg)+\frac{t}{\log n}$, which gives $S(n)=O(t)$. We don't need to recurse?

The running time satisfies the following recurrence: $T(n)=\frac{n}{bg}\cdot T(bg)+O(n\log n)$, which gives $T(n)=O(n\log n)$.
}

\newcommand{\ttt}{t'}

\subparagraph*{Case 1: $\OPT \le n^{5/6+\delta}$.}
Here, we modify the medium OPT algorithm in Section~\ref{app:medium}.
We first guess a value $\ttt\in [\OPT/n^\delta,\OPT]$; a constant ($O(1/\delta)$) number of guesses suffices.
The preprocessing algorithm will use this parameter $\ttt$.

In the static setting, our old version of Lemma~\ref{lem:t0} for constructing $T_0$ suffices and takes $O(n\log n)$ time~\cite{ChanH20}.  Specifically, we construct a set $T_0\subseteq S$ of $O(\ttt)$ halfspaces so that
after removing the points covered by $T_0$, every halfspace of $S$ contains at most $O(\frac{n}{\ttt})$ points.
Since we are in the static setting, we can explicitly remove the points covered by $T_0$.
The shallow versions of the partition trees~\cite{matousek1992reporting}
can be preprocessed in $O(n\log n)$ time.
Parts of the algorithm can be simplified: 
there is no need to divide into phases, and no need for the extra counters $c^\#_v$ and the extra point set~$E$ during the MWU algorithm.
The MWU algorithm then runs in
$\OO(t\cdot (\frac n\ttt)^{2/3+O(\delta)})=\OO(t^{1/3}n^{2/3+O(\delta)})=O(n^{17/18+O(\delta)})$ time.  As we need to run the MWU algorithm for all guesses $t$ that are powers of 2 (up to $n^{5/6+\delta}$), the running time of the MWU algorithm increases by a logarithmic factor but remains $O(n^{17/18+O(\delta)})$, excluding preprocessing.


To bound the error probability by $O(\frac1N)$, we can re-run the MWU algorithm $O(\log N)$ times.  The total running time including preprocessing is $O(n\log n + n^{17/18+O(\delta)}\log N)$.

\subparagraph*{Case 2: $\OPT > n^{5/6+\delta}$.}
Here, we modify the large OPT algorithm in Section~\ref{algo:large}.
In the static setting, there is no need for the halfspace range reporting and triangle range searching structures.
We first generate the conflict lists of the cells
(the lists of halfspaces crossing the cells) in $\VD(R)$ in $O(n\log n)$ expected time~\cite{ClarksonS89}.
We verify that indeed every conflict list has size $O(b\log n)$; if not, we  restart, with $O(1)$ expected number of trials till success.
For each point $p\in X$, we locate the cell in $\VD(R)$ which contains $p$ in the $xy$-projection, by planar point location in $O(n\log n)$ time~\cite{BerBOOK}.

We refine $\VD(R)$ before applying the planar separator theorem: for each cell in $\VD(R)$, we subdivide into subcells each containing at most $b$ points of $X$ in the $xy$-projection.
The number of extra cuts is $O(\frac nb)$, and so the new decomposition still has $O(\frac nb)$ cells.

(As noted before, when there are both upper and lower halfspaces, we replace the vertical decomposition with a ``star'' triangulation and replace orthogonal $xy$-projection with a perspective projection.)

The original large OPT algorithm takes a random sample of the clusters to approximate the value of the optimal solution.
To compute
an actual solution, we instead use recursion in \emph{every} cluster.  

Recall that the number of clusters is $O(n/(bg))$.
For each cluster $\gamma$, the number of halfspaces in $S_\gamma$ is $\OO(bg)$,
and the number of points in $X_\gamma$ is also $\OO(bg)$, because of the above refinement of $\VD(R)$.
Recall that after removing halfspaces in $S_B$,
the $S_\gamma$'s become disjoint; and after removing points covered by $R\cup S_B$,
the sum of the optimal values in the subproblems  $\OPT_\gamma$  is upper-bounded by $\OPT$.

We set $b=n^{1/6}$ and $g=n^{1/3}$ so that the additive error is $\OO(\frac nb + \frac n{\sqrt{g}}) =\OO(n^{5/6})\le O(\OPT/\log n)$.

\subparagraph*{Analysis.}
The worst-case expected running time of the overall algorithm for input size $n$ satisfies the  recurrence $T(n)\leq \sum_i T(n_i)+O(n\log n + n^{17/18+O(\delta)}\log N)$, for some $n_i$'s with $\sum_i n_i\leq n$ and
$\max_i n_i = \tilde{O}(\sqrt{n})$. 
This recurrence solves to
$T(n) = O(n\log n + n\log N)$.
(Roughly, the  reason is that $\log n$ forms a geometric progression as we
descend downward in the recursion tree.)


Let $E(n,x)$ be
the worst-case additive error of the computed solution for an input of size $n$ and optimal value $x$.  In Case~1, the additive error is $O(x)$.
In general, we 
have the recurrence $E(n,x) \le \max\{O(x),\, \sum_i E(n_i,x_i)+O(\frac{x}{\log n})\}$, for some $n_i$'s and $x_i$'s with $\sum_i n_i\leq n$, and $\max_i n_i=\tilde{O}(\sqrt{n})$, and $\sum_i x_i\leq x$. This recurrence solves to $E(n,x)=O(x)$.
(Roughly, the  reason is that $\frac 1{\log n}$ forms a geometric progression as we
descend downward in the recursion tree.)
Thus, the algorithm yields an $O(1)$-approximation.

The total error probability over the entire recursion is bounded by $O(\frac nN)$.  We set $N=n^c$ for the global input size $n$ and an arbitrarily large constant $c$.

\begin{theorem}
Given $O(n)$ halfspaces and $O(n)$ points in 3D, there exists a randomized $O(1)$-approximation algorithm for the set cover problem that runs in $O(n\log n)$ expected time and is correct w.h.p.
\end{theorem}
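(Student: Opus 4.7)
The plan is to combine the medium OPT and large OPT dynamic algorithms, stripped of their dynamization machinery, into a recursive static algorithm. First I would guess the value of $\OPT$ up to a constant factor by running a constant number of trials for each power of $2$ (or, to avoid redundant guessing, by coarsely estimating $\OPT$ via the large OPT value-approximation procedure from Section~\ref{algo:large}). I would then branch on whether $\OPT\le n^{5/6+\delta}$ or $\OPT>n^{5/6+\delta}$, with the threshold chosen so that the MWU-based branch runs in $O(n^{17/18+O(\delta)})$ time while the recursive branch decomposes into subproblems of size $\tilde O(\sqrt n)$.

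For the small/medium case, I would reuse the construction of $T_0$ from our SoCG'20 paper in $O(n\log n)$ time, explicitly delete the points it covers (possible since we are static), build Matou\v{s}ek's shallow partition trees on the remaining points and on the dual of $S\setminus T_0$ in $O(n\log n)$ preprocessing, and then execute Algorithm~\ref{algo:2} using the counter-augmented partition trees exactly as in Section~\ref{app:medium}. Because no updates occur, the phase-based rebuilding and the auxiliary sets $E$, $c^\#_v$ are unnecessary, so the only cost beyond preprocessing is $\OO(\OPT^{1/3}n^{2/3+O(\delta)})$, which is $O(n^{17/18+O(\delta)})$ in this regime. Boosting to error probability $1/N$ costs an $O(\log N)$ factor.

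For the large OPT case, I would take the random sample $R$ of size $n/b$ and build $\VD(R)$, computing conflict lists in $O(n\log n)$ expected time by Clarkson--Shor and verifying via a constant-expected-trials restart that every list is $O(b\log n)$. I would then refine $\VD(R)$ so that each (sub)cell contains at most $b$ points of $X$ in the $xy$-projection (locating points by planar point location in $O(n\log n)$ time), apply the planar separator theorem of Lemma~\ref{thm:3} with cluster size $g$, and instead of sampling clusters (which only approximates the value) I would \emph{recurse} on each cluster $\gamma$ on the problem with halfspaces $S_\gamma$ and points $X_\gamma\setminus (R\cup S_B)$. With $b=n^{1/6}$, $g=n^{1/3}$, the additive error is $\OO(n/b+n/\sqrt g)=\OO(n^{5/6})=O(\OPT/\log n)$, and each subproblem has size $\tilde O(bg)=\tilde O(\sqrt n)$.

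The final step is solving the recurrences. For running time I would show $T(n)\le \sum_i T(n_i)+O(n\log n + n^{17/18+O(\delta)}\log N)$ with $\sum_i n_i\le n$ and $\max_i n_i=\tilde O(\sqrt n)$; a standard induction gives $T(n)=O(n\log n+n\log N)$ because $\log n_i\le \tfrac12\log n+O(\log\log n)$ makes the $n\log n$ term form a geometric series down the recursion tree. For the approximation guarantee I would track $E(n,x)\le \max\{O(x),\sum_i E(n_i,x_i)+O(x/\log n)\}$, which likewise solves to $E(n,x)=O(x)$ since $1/\log n$ telescopes geometrically. Setting $N=n^c$ for a large constant $c$ and union-bounding over the at most $O(n)$ recursive subproblems yields correctness w.h.p. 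The main obstacle I anticipate is ensuring that the geometric decay in both recurrences is genuinely geometric despite the $\log N$ factor and that the $T_0$ construction and the conflict-list generation can indeed be folded into the single $O(n\log n)$ preprocessing budget at every recursion level, but both seem to follow from the choice of thresholds.
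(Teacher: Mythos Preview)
Your proposal is correct and follows essentially the same route as the paper: the same $n^{5/6+\delta}$ threshold, the same static simplification of the medium-OPT MWU (old $T_0$ construction, explicit point removal, shallow partition trees, no phases/$E$/$c^\#_v$), the same large-OPT recursion with $b=n^{1/6}$, $g=n^{1/3}$ and refinement of $\VD(R)$ to cap points per cell, and the same pair of recurrences for $T(n)$ and $E(n,x)$ with the same geometric-decay argument. The only cosmetic difference is that the paper separates two levels of guessing in Case~1 (a coarse $t'\in[\OPT/n^\delta,\OPT]$ with $O(1/\delta)$ choices to fix the preprocessing, then all powers-of-2 values of $t$ up to $n^{5/6+\delta}$ for the MWU itself), whereas you fold these together; this does not affect the bound.
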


It is possible to modify the analysis to get $O(n\log n)$ worst-case time instead of expected.
In Case~2, the conflict lists have size $O(b\log n)$ w.h.p.\ and the construction time is actually $O(n\log n)$ w.h.p., at the root of the recursion.  In subsequent levels of the recursion, we can apply a Chernoff bound to get a high-probability bound on the total running time.

\bibliographystyle{plainurl}
\bibliography{references}







\ignore{
**********************************************

This gives us all the sets $S_B$ and $S_\gamma$.
The sets $X_\gamma$ can be computed by $n$ planar point location queries in $O(n\log n)$ time.  Approximating $\OPT_\gamma$ for the $r$ randomly chosen clusters $\gamma$ takes
$\OO(rbg)$ time.  
For the specified range for $\OPT$, the parameters $r$, $b$, and $g$ are all $n^{O(\delta)}$,
and so an $O(1)$-approximation to $\OPT$ can be found in sublinear additional time.

**********

The 3D halfspace range reporting structure and 2D triangle range searching structure can be preprocessed in $O(n\log n)$ time for $z=1$.  (This preprocessing does not depend on the guess $t$ and is done just once.)

By setting $b=n^{1/8}$ and $z=1$,
our large OPT algorithm computes the value of an $O(1)$-approximation in time
$\OO(\frac nb + \frac{b^3 n}{\tiny\OPT} + \frac{b^2 n^{3/2+\delta}}{{\tiny\OPT}\cdot z^{1/2}})
= O(n^{7/8+O(\delta)})$ w.h.p.

To compute an actual solution, we use recursion in each cluster.  First we generate

***************************

Let $c$ be a sufficiently large constant.

\subparagraph*{Case 1: $\OPT \le n^{1-c\delta}$.}
We modify our algorithm for the medium OPT case.
We first guess a value $\ttt\in [\OPT/n^\delta,\OPT]$; a constant ($O(1/\delta)$) number of guesses suffices.
(Since we aim to avoid extra logarithmic factors, we cannot afford to try all $t$ that are powers of 2 as before.)

In the static setting, our old version of Lemma~\ref{lem:t0} for constructing $T_0$ suffices and takes $O(n\log n)$ time~\cite{ChanH20}.  Specifically, we construct a set $T_0\subseteq S$ of $O(\ttt)$ halfspaces so that
after removing points covered by $T_0$, every halfspace of $S$ contains at most $O(\frac{n}{\ttt})$ points.
Since we are in the static setting, we can explicitly remove the points covered by $T_0$.
The shallow versions of the partition trees~\cite{matousek1992reporting}
can be constructed in $O(n\log n)$ time.
There is no need to divide into phases, and no need for dealing with the extra point set~$E$ in the MWU algorithm.
The MWU algorithm then runs in
$\OO(\ttt^{1/3}n^{2/3+O(\delta)})$ time,
which is sublinear if $\OPT\le n^{1-c\delta}$ for a sufficiently large~$c$.

\subparagraph{Case 2: $n^{1-c\delta} < \OPT \le n/\log^c n$.}
We first modify the large OPT algorithm to compute an $O(1)$-approximation to the value of $\OPT$.
In the static setting, there is no need for the halfspace range reporting and triangle range searching structures.
We can generate the conflict lists of the cells
(the lists of halfspaces crossing the cells) in $\VD(R)$ in $O(n\log n)$ expected time~\cite{??}.
This gives us all the sets $S_B$ and $S_\gamma$.
The sets $X_\gamma$ can be computed by $n$ planar point location queries in $O(n\log n)$ time.  Approximating $\OPT_\gamma$ for the $r$ randomly chosen clusters $\gamma$ takes
$\OO(rbg)$ time.  
For the specified range for $\OPT$, the parameters $r$, $b$, and $g$ are all $n^{O(\delta)}$,
and so an $O(1)$-approximation to $\OPT$ can be found in sublinear additional time.

Knowing an $O(1)$-approximation, we can go back to the medium OPT algorithm...

with running time
$\OO(t(n/t)^{2/3+O(\delta)})$, which is sublinear


******

\subparagraph*{Algorithm for $\OPT > n/\log^c n$.}
don't recurse... 

for $N$ points and $n$ halfspaces, does our previous deterministic algorithm take $O(N\log^{O(1)}n)$ time??
if so, we can use it for subproblems with polylog halfspaces...

*******************

(when $t\leq O(n^{1-\epsilon})$), we also need binary search. Preprocessing takes $O(n\log n)$ time: constructing shallow partition takes $O(n\log n)$ time, with crossing number $2^{O((\log\log n)^2)}$. Constructing partition trees at each level of the shallow partition takes $O(n\log n)$ time. We need to use our old lemma for constructing $T_0$, in $O(n\log n)$ time, and this algorithm gives $T_0$ for all $t=2^i$. The running time for constructing shallow partition is $O(n\log n)$, and we are using different levels when we run the algorithm for different $t$. In each binary search step, the algorithm takes $o(n)$ time, so the total running time is $O(n\log n)$.

\subparagraph*{Algorithm for small OPT.} For the small OPT case, we can use our new algorithm, and we need binary search. Preprocessing takes $O(n\log n)$ time (construct partition tree in $O(n\log n)$ time \cite{Chan12DCG}, and also dynamic weighted range sampling). In each binary search step, the algorithm takes $o(n)$ time, so the total running time is $O(n\log n)$.


\subparagraph*{Algorithm for medium OPT.} For the medium OPT case (when $t\leq O(n^{1-\epsilon})$), we also need binary search. Preprocessing takes $O(n\log n)$ time: constructing shallow partition takes $O(n\log n)$ time, with crossing number $2^{O((\log\log n)^2)}$. Constructing partition trees at each level of the shallow partition takes $O(n\log n)$ time. We need to use our old lemma for constructing $T_0$, in $O(n\log n)$ time, and this algorithm gives $T_0$ for all $t=2^i$. The running time for constructing shallow partition is $O(n\log n)$, and we are using different levels when we run the algorithm for different $t$. In each binary search step, the algorithm takes $o(n)$ time, so the total running time is $O(n\log n)$.

\subparagraph*{Algorithm for large OPT.} When $t>\tilde{\Omega}(n^{5/6})$, we first use the large OPT algorithm in Sec.~\ref{algo:large} to get an $O(1)$-approximation for the optimal solution, in sublinear time. Since we want to output the actual solution, we need to recurse in each cluster. After $O(n\log n)$ time preprocessing (for the range searching data structures), the running time satisfies the following recurrence $T(n)=\sum_{i=1}^{n/(bg)}\cdot T'(n_i)+O(n\log n)$ where $\sum_i n_i\leq n$ and $\forall i~n_i\leq bg\log n$ (by concavity of $T(n)$, the worst case happens when each subproblem has size either $O(bg\log n)$ or $0$), where $T'(n)$ denote the running time for the set cover problem with unknown $t$, and the additive error of the computed solution satisfies the following recurrence $E(n,t)=\sum_{i=1}^{n/(bg)} E'(n_i,t_i)+O(\frac{n}{b}+\frac{n}{\sqrt{g}}\log n)$, where $\sum_i n_i\leq n$, $\forall i~n_i\leq bg\log n$, and $\sum_i t_i\leq t$.

TC: number of halfspaces goes down, but number of points may not??

TC: another issue when recursing: high probability in $n$, but $n$ shrinks?

Set $b=\frac{n}{t}\log n$ and $g=\frac{n^2\log^3 n}{t^2}$ such that $O(\frac{n}{b}+\frac{n}{\sqrt{g}}\log n)=O(\frac{t}{\log n})$, the total running time is $T(n)=O(n\log n)$ (because the sum forms a geometric series, since $bg=\tilde{O}(\frac{n^3}{t^3})=\tilde{O}(n^{1/2})$), and the total additive error is $E(n,t)=O(t)$.

TC: this assumes $t_i > n_i^{5/6}$ throughout the recursion??

\begin{theorem}
Given $O(n)$ points and $O(n)$ disks, we can find a subset of disks covering all points, of size within $O(1)$ factor of the minimum, in
$O(n\log n)$ time by a randomized Monte-Carlo algorithm with error probability $O(n^{-c_0})$ for any constant~$c_0$.
\end{theorem}

In conclusion, in any case the total running time is $O(n\log n)$.

\subparagraph*{Analysis.} Consider a subproblem with $n$ points+objects and optimal solution $o$. Let $t=t(n)=n^{5/6}$ (rather than a guess on OPT). Compute an $O(1)$-approx of $o$ in $O(n\log n)$ time. If $o<t$, then we use the medium OPT algorithm. Otherwise recursively compute a solution as follows:\\
Set $b=\frac{n}{t}\log N$ and $g=\frac{n^2\log^4 N}{t^2}$ such that the additive error is $O(\frac{n}{b}+\frac{n}{\sqrt{g}}\log n)=O(\frac{t}{\log N})\leq O(\frac{o}{\log N})$. ($N$ is the global bound on the number of points.) We first compute the separator and clusters, and for each cluster $\gamma$ compute $X_\gamma$ and $S_\gamma$, in $O(n\log n)$ time. ($X_\gamma$ is computed using point location.) Then we recurse in each cluster.\\

The running time satisfies the following recurrence $T(n)\leq \sum_{i=1}^{O(n/(bg))} T(n_i)+O(n\log n)$ where $\sum_i n_i\leq n$ and $\forall i~n_i\leq bg\log n=\tilde{O}(n^{1/2})$. By concavity of $T(n)$, the worst case happens when each subproblem has size either $O(bg\log n)$ or $0$. This gives $T(n)=O(n\log n)$ by geometric series.

The additive error of the computed solution satisfies the following recurrence $E(n,o)=\sum_{i:o_i<t(n_i)}O(o_i)+\sum_{i:o_i\geq t(n_i)} E(n_i,o_i)+O(\frac{o}{\log N})$, where $\sum_i n_i\leq n$, $\forall i~n_i\leq bg\log n=\tilde{O}(n^{1/2})$, and $\sum_i o_i\leq o$. This gives $T(n,o)=O(o)$.

}

\end{document}